\newcommand{\A}{\mathcal{A}}
\newcommand{\B}{\mathcal{B}}
\newcommand{\C}{\mathcal{C}}
\newcommand{\sa}{synchronizing automata}
\newcommand{\Whp}{With probability
$1-O(\frac{1}{n})$}\newcommand{\whp}{with probability
$1-O(\frac{1}{n})$}
\newtheorem{myremark}{Remark}[section]
\newtheorem{conjecture}[theorem]{Conjecture}
\begin{document}

\title{On the probability of being synchronizable}



\author{Mikhail V. Berlinkov}{\url{http://orcid.org/0000-0002-3903-0130}}
%
\authorrunning{M. Berlinkov}


\maketitle

\begin{abstract}
We prove that a random automaton with $n$ states and any fixed non-singleton alphabet is synchronizing with high probability (modulo an unpublished result about unique highest trees of random graphs). Moreover, we also prove that the convergence rate is exactly $1-\Theta(\frac{1}{n})$ as conjectured by [Cameron, 2011] for the most interesting binary alphabet case. Finally, we present a deterministic algorithm which decides whether a given random automaton is synchronizing in linear in $n$ expected time and prove that it is optimal.
\keywords{synchronizing automata, random mappings, random digraphs}
\end{abstract}

\section{Synchronizing automata}
\label{intro}

Suppose $\A$ is a complete deterministic finite automaton whose input alphabet is $A$ and whose state set
is $Q$. The automaton $\A$ is called \emph{synchronizing} if there exists a word $w\in A^*$ whose action \emph{resets} $\A$, that is, $w$ leaves the automaton in one particular state no matter at which state in $Q$ it is applied: $q.w=q'.w$ for all $q,q'\in Q$. Any such word $w$ is called a \emph{reset word} of $\A$. One can check that a word $(ab^3)^2a$ is reset for an automaton $\mathfrak{C}_4$ depicted on Figure~\ref{fig:underlying_graphs} (left). This automaton belongs to \v{C}ern\'y's series of circular $n$-state synchronizing automata having shortest reset words of length $(n-1)^2$. \v{C}ern\'y, who pioneered the study of synchronizing automata, introduced this series and conjectured that the series represents the extreme case in terms of the maximal length of reset words for $n$-state automata. This statement became known as the \v{C}ern\'y conjecture.
\begin{conjecture}[\v{C}ern\'y, 1964]
Each synchronizing automaton with $n$ states has a reset word of length at most $(n-1)^2$.
\end{conjecture}
Despite attracting a lot of interest, the conjecture remains open, and the best known upper bound is cubic in $n$. A cubic bound was first established in \cite{Fr1982,Pin1983} and also proved independently in~\cite{KRS1987}. There is also a number of particular classes of automata for which more specific bounds were found. For a brief introduction to the theory of \sa\ we refer the reader to the survey~\cite{Vo08}.

Synchronizing automata serve as transparent and natural models of error-resistant systems in many applications (coding theory, robotics, testing of reactive systems) and also reveal interesting connections with symbolic dynamics and other parts of mathematics. Synchronization generally implies some desired property varying on the application. For instance, if a prefix code is synchronizing, even if an error occurred during decoding, the correct decoding process will recover on its own with high probability (under certain assumptions on the input), thus making the process error-resistant. 

In this paper we aim to show that synchronization is a universal property rather than an exception. Formally, we prove that a uniformly chosen random automaton is synchronizing with probability $1-O(\frac{1}{n^{0.5 c}})$ where $n$ is the number of states and $c$ is the alphabet size. Moreover, we prove that the convergence rate is tight for
the most interesting binary case, thus confirming Cameron's conjecture~\cite{CamConj}.



Up to recently, the best results in this direction were much weaker: in~\cite{Zaks4} it was proved that random $4$-letter automata are synchronizing with probability $p$ for a specific constant $p>0$; in~\cite{Zaks10} it was proved that if a random automaton with $n$ states has at least $72 \ln(n)$ letters then it is almost surely synchronizing. Recently, Nicaud~\cite{FastSyn} has shown (independently) by a pure combinatoric techniques that a random $n$-state binary automaton is synchronizing with probability $1-O(n^{-\frac{1}{8}+o(1)})$ and admits a reset word of length $O(n^{1 + \varepsilon})$. This implies that the \v{C}ern\'y conjecture holds with high probability. Moreover, this result has been used in~\cite{BerlMarekAlgCrit} to show that the \v{C}ern\'y conjecture holds with probability exponentially close to $1$ in $n$ and  also that the expected value of the shortest reset word length of an $n$-state random synchronizing automaton is at most $n^{3/2+o(1)}$.

\section{The probability of being synchronizable}
\label{sec_main}

Let $Q$ stand for $\{1,2, \dots ,n\}$ and $\Sigma_n$ for the probability space of all unambiguous maps from $Q$ to $Q$ with the uniform probability distribution. By $\Sigma_{n}^{k}$ denote the probability space of $k$ independently chosen maps from $\Sigma_{n}$. This probability space matches the probability space of all $k$-letter $n$-state automata with uniform distribution. Throughout this paper let $\A=\langle Q,\{a,b\} \rangle$ be a random automaton from $\Sigma_{n}^{2}$, that is, $\A$ is chosen uniformly at random from the set of all $2$-letter automata with $n$ states. As we do not have any restrictions on $\A$, we can choose such an automaton by picking $a$ and $b$ independently and uniformly at random (u.a.r) from $\Sigma_n$, that is, $$P(\A = \langle Q,\{a',b'\} \rangle) = P_{a \in \Sigma_n}(a = a')P_{b \in \Sigma_n}(b = b').$$ In its turn, the choice of a random mapping from $\Sigma_n$ is equivalent to choosing independently and u.a.r. an image for each state $q \in Q$. Here and below by independence of two objects $O_1(\A)$ and $O_2(\A)$ determined by an automaton, we mean the independence of the corresponding events $O_1(\A) = O_1$ and $O_1(\A) = O_2$, for each instances $O_1, O_2$ from the corresponding sets. Notice also that if two objects are defined by independent objects, they are also independent, e.g. the set of all self-loops by $a$ is independent of the letter $b$. We will extensively utilize this argument throughout the paper. 

Our main result is the following theorem.
\begin{theorem}
\label{th_main} The probability of being synchronizable for $2$-letter random automata with $n$ states equals
$1-\Theta(\frac{1}{n})$.
\end{theorem}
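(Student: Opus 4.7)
My plan is to establish the matching bounds $\Pr[\mathrsfs{A}\text{ is non-synchronizable}] = \Theta(1/n)$ by treating the upper and lower directions separately.

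\emph{Lower bound.} I would identify an elementary event of probability $\Theta(1/n)$ forcing non-synchronization. Let $E$ be the event that there exists an \emph{isolated sink}: a state $p$ with $p.a = p = p.b$ and with $q.a \neq p$, $q.b \neq p$ for every $q \neq p$. The four constraints at a fixed $p$ are mutually independent (one condition on $p.a$, one on $p.b$, and one on the values $q.a$ and $q.b$ at the remaining $n-1$ states), so
\[
\Pr[p \text{ is an isolated sink}] = \frac{1}{n^2}\left(1 - \frac{1}{n}\right)^{2(n-1)},
\]
and the expected number $N$ of isolated sinks tends to $e^{-2}/n$. A short second-moment check -- the joint probability that two specified states are both isolated sinks is $(1/n^4)(1-2/n)^{2(n-2)} \to e^{-4}/n^4$, contributing $e^{-4}/n^2$ when summed -- gives $\mathbb{E}[N^2] \sim \mathbb{E}[N]$, so Cauchy--Schwarz yields $\Pr[E] \geq \mathbb{E}[N]^2/\mathbb{E}[N^2] = \Theta(1/n)$. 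When an isolated sink $p$ exists, $p \in Q.w$ for every word $w$ (since $p$ is fixed) while no $q \neq p$ ever reaches $p$: a shortest word $w = w'x$ with $q.w = p$ would yield a non-trivial in-neighbour $q.w' \neq p$ of $p$, a contradiction. Hence $|Q.w| \geq 2$ for all $w$, so $\mathrsfs{A}$ is not synchronizable.

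\emph{Upper bound.} This is the main technical content. Using the characterization that $\mathrsfs A$ is non-synchronizing iff there exists a subset $S \subseteq Q$ reachable from $Q$ with $|S| \geq 2$ on which both $a$ and $b$ act injectively (so the orbit of $S$ under $\{a,b\}$ can never shrink below $|S|$), I would bound the probability of such an obstructing $S$ by a union bound split on $|S|$. For $|S|=2$, a direct enumeration of the configurations in which $a$ and $b$ each permute $S$ gives per-pair probability $(2/n^2)^2$, summing to $O(1/n^2)$. For moderate $|S|=k$, the probability that both $a$ and $b$ act as a permutation on a fixed $k$-set is $(k!/n^k)^2$, and even summed over all $\binom{n}{k}$ candidates this contributes $O(k!/n^k)$, decaying geometrically. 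The remaining $\Theta(1/n)$ mass must come from large orbits, which I would control via the classical structure of random functional graphs: after iterating $a$, the image collapses onto the $a$-cyclic set of expected size $\Theta(\sqrt{n})$, and the probability that $b$ then preserves this small set through further iterations is small.

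The hard part will be matching the $O(1/n)$ rate exactly while accounting for every possible stuck orbit: showing that, apart from the contribution captured by the isolated-sink configuration of the lower bound, every other obstruction contributes $o(1/n)$. I expect this to require a careful structural decomposition, handling separately the cases where the minimal stuck image is small (ruled out at level $1/n^2$ by the direct count above) and where it is large (ruled out via the rarity of $b$ acting bijectively on the $a$-cyclic set), with the crossover between these regimes producing exactly the isolated-sink event.
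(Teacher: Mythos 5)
Your lower bound is correct and is essentially the paper's Lemma~\ref{lem_weak}: an ``isolated sink'' is exactly a singleton weakly connected component, and your first/second-moment computation yields the same $\Omega(1/n)$ bound that the paper gets by direct counting. No issue there.

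The upper bound, however, rests on a false reduction, and the part you defer is the entire content of the theorem. First, a set $S$ on which both letters act injectively is not an obstruction to synchronization: injectivity on $S$ says nothing about injectivity on $S.a$ or $S.b$, so the orbit of $S$ can still collapse (take $|S|=2$ with $1.a=3$, $2.a=4$ but $3.a=4.a=5$). The correct invariant objects are deadlock pairs, equivalently minimal images $Q.w$, on which \emph{every word} acts injectively; this is a global property that cannot be certified letter by letter on a fixed set. Second, even granting the reduction, the events you actually bound are wrong: ``$a$ and $b$ each permute $S$'' (probability $(k!/n^k)^2$) is far stronger than ``$a$ and $b$ are injective on $S$'' (probability close to $1$ for small $k$), and neither is the relevant event, since a deadlock pair or minimal image is generally not mapped to itself by the letters but wanders through the automaton; so the $|S|=2$ and moderate-$k$ computations do not bound the probability of non-synchronization. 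Third, the ``large orbit'' case that you leave to ``a careful structural decomposition'' is where all the work lies: the paper needs the Road Coloring machinery (a stable pair produced from $F$-cliques and the unique highest tree of a random functional graph, Theorem~\ref{th_1stable}, which itself requires Theorem~\ref{th_high_tree} and Lemma~\ref{lem_H_is_reachable}), amplification to many stable pairs (Theorem~\ref{th_many_stable_ext}), connectivity and stability of the large clusters (Lemmas~\ref{lem_S_conn_clusters} and~\ref{lem_stable_cluster}), and a separate counting argument (Lemma~\ref{lem_deadlock}) showing that a surviving deadlock pair would force too many cycle states of one letter into the small clusters of the other. None of these ideas, nor substitutes for them, appear in your sketch, so the $1-O(1/n)$ direction is not established.
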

 
\section{Connectivity and the Upper Bound}
\label{sec:upper_bound}

Let us call \emph{subautomaton} a terminal strongly-connected component~\footnote{A strongly-connected component $S$ is terminal 
when $S\cdot u\subseteq S$ for every $u\in A^*$.} of $\A$.
Call an automaton \emph{weakly connected} if it has only one \emph{subautomaton}. Observe that if an automaton is synchronizing, it must be weakly connected. Hence the following lemma gives the upper bound of Theorem~\ref{th_main}.
\begin{lemma}
\label{lem_weak} The probability that $\A$ is not
weakly connected is at least $\Omega(\frac{1}{n})$.
\end{lemma}
\begin{proof}
Let us count the number of automata having exactly one
\emph{disconnected loop}, that is automata having a state with only (two) incoming arrows from itself. We first choose a unique state $p$ of a disconnected loop in $n$ ways. The transitions for this state are defined in the unique way. For any other state $q$, we define transitions in $(n-2)^2$ ways by choosing the image for each letter to be any state except $p,q$, ensuring that $q$ is not disconnected and $p$ has no transitions from other states. Thus the probability of having exactly one disconnected loop is at least
$$\frac{n (n-2)^{2(n-1)}}{n^{2n}} = \frac{1}{n}\left(1-\frac{2}{n}\right)^{2(n-1)} = \Theta\left(\frac{1}{n}\right).$$
This concludes the proof of the lemma as such automata are not weakly connected.
\end{proof}

The following lemma can be obtained as a consequence of
\cite{AccAut}[Theorem~3] but we present the proof here for
the sake of completeness.
\begin{lemma}
\label{lem_size_of_subaut} The number of states in each subautomaton of $\A$ is
at least $n/4$ \whp.
\end{lemma}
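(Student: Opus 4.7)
The plan is to prove this by a first-moment argument on \emph{closed} subsets of $Q$: a set $M \subseteq Q$ with $M.a \subseteq M$ and $M.b \subseteq M$. Every subautomaton of $\mathrsfs{A}$ is in particular closed under both letters, so it suffices to show that whp no closed subset of size in $[1, \lfloor n/(qe^2)\rfloor]$ exists.

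The probability estimate is elementary: for a fixed $M$ of size $m$ and a uniformly random $x \in \Sigma_n$, each of the $m$ states of $M$ is mapped into $M$ independently with probability $m/n$, so $\Pr[M.x \subseteq M] = (m/n)^m$. By the independence of $a$ and $b$, $\Pr[M \text{ is closed}] = (m/n)^{2m}$. A union bound over all subsets of each size then bounds the expected number of closed subsets of size at most $M_0 := \lfloor n/(qe^2) \rfloor$ by
$$E_n \;\leq\; \sum_{m=1}^{M_0} \binom{n}{m}\left(\frac{m}{n}\right)^{2m}.$$

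Using the standard inequality $\binom{n}{m} \leq (ne/m)^m$, each term satisfies $\binom{n}{m}(m/n)^{2m} \leq (em/n)^m$. For $m \leq M_0$ we have $em/n \leq 1/(qe)$, so factoring out one copy of $em/n$ and bounding the remaining $(m-1)$-fold product yields
$$E_n \;\leq\; \frac{e}{n}\sum_{m=1}^{\infty} m \left(\frac{1}{qe}\right)^{m-1} \;=\; O(1/n),$$
since $\sum_{m \geq 1} m x^{m-1} = (1-x)^{-2}$ converges for $x = 1/(qe) < 1$ when $q > 1$. Markov's inequality then shows that $\mathrsfs{A}$ admits a closed subset of size at most $n/(qe^2)$ with probability $O(1/n)$, which is whp.

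I do not anticipate a serious obstacle: the computation is a standard first-moment application. The $m=1$ contribution is already $n\cdot(1/n)^2 = 1/n$, so $E_n = \Theta(1/n)$ and the rate is tight, matching the $\Omega(1/n)$ lower bound implicit in Lemma~\ref{lem_weak} (a singleton subautomaton is nothing but a common fixed point of $a$ and $b$). The only care required is to deploy the Stirling-type estimate $\binom{n}{m} \leq (ne/m)^m$ cleanly enough that the $(em/n)^m$ tail is geometrically summable over the whole range $m \leq M_0$; larger values of $m$ would re-enter the regime $em/n > 1/e$ where $(em/n)^m$ starts growing, which is precisely the reason for the threshold $1/e^2$ in the statement.
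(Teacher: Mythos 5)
Your proof is correct and follows essentially the same route as the paper: a first-moment union bound over closed subsets, the estimate $\binom{n}{m}(m/n)^{2m}\leq(em/n)^m$, and a geometric-series summation (the paper bounds the ratio of consecutive terms by $1/q$, you factor out $e/n$ and sum $\sum_m m x^{m-1}$, which is the same calculation).
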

\begin{proof}
Given $1 \leq i < n/2$, there are ${n \choose i}$ ways to choose a set of states of a subautomaton of size $i$, then there are $i^{2i}$ ways to define transitions for both letters in the chosen subautomaton, and $n^{2(n-i)}$ ways to define transitions for the remaining states. Thus, the number of automata with a subautomaton of size $i$ is at most 
\begin{equation}
\label{eq_num_with_sub}
N_i = {n \choose i}i^{2i}n^{2(n-i)}.
\end{equation}
Let us consider the ratio $N_{i+1}/N_{i}$. We have 
\begin{equation}
\label{eq_with_sub_ratio}
N_{i+1}/N_{i} = \frac{n-i}{i+1}\frac{(i+1)^{2(i+1)}}{i^{2i}n^2} = \frac{(n-i)(1+\frac{1}{i})^{2i}(i+1)}{n^2} \leq \frac{e^2(n-i)(i+1)}{n^2}.
\end{equation}
Trivial analysis shows that the maximum of (\ref{eq_with_sub_ratio}) for $i \leq n/4$ is reached when $i$ is maximal, thus (\ref{eq_with_sub_ratio}) is maximal for $i = n/4$ and equals $3 e^2 (1+o(1))/ 16 < 1$. Using that, we have that the total number of automata with a sabautomaton of size smaller than $n/4$ is upper bounded by the sum of the geometric sequence with the common factor smaller than 1 and the first term equals $n^{2n-1}$. The lemma follows from the fact that the total number of binary automata with $n$ states is $n^{2n}$.
\end{proof}

\begin{lemma}
\label{lem:one_subautomaton}
\Whp\ there is only one strongly-connected subautomaton in $\A$.
\end{lemma}
\begin{proof}
Suppose there are at least two strongly connected subautomata in $\A$ having sizes $s_1, s_2$ respectively. Notice that there are at most $3^n$ ways to choose subsets of states of these two subautomata, as this choice can be defined by coloring $Q$ into three colors. Then, transitions for states in these subautomata can be defined in ${s_1}^{2s_1}$ and ${s_2}^{2s_2}$ ways resp., and for the rest $n-s_1-s_2$ states they can be defined in at most ${n}^{2(n-s_1-s_2)}$ ways. As the total number of automata is $n^{2n}$, the probability of this happening for given $s_1, s_2$ can be upper bounded by  
\begin{equation}
\label{eq:two_sub}
3^n \left(\frac{s_1}{n}\right)^{2s_1}\left(\frac{s_2}{n}\right)^{2s_2}, 
\end{equation} 
where $s_1$ and $s_2$ both have size at least $n/4$ \whp\ due to Lemma~\ref{lem_size_of_subaut}. 
It can be easily shown that the maximum of (\ref{eq:two_sub}) is reached when $s_1 = s_2$ when $s_1+s_2$ is fixed. Similarly, using that the function $x^{c x} = e^{c x \ln{x}}$ is increasing for $x \geq 1/e$, one can deduce that (\ref{eq:two_sub}) is maximal when $s_1=s_2=n/2$, and thus is upper bounded by
\begin{equation}
3^n \left(\frac{1}{2}\right)^{n}\left(\frac{1}{2}\right)^{n} = 3^n / 4^{n}.
\end{equation}
Since there are at most $n^2$ ways to choose $s_1, s_2$, the lemma follows.
\end{proof}

\section{The Lower Bound}
\label{sec:lower_bound}

Now we turn to the proof of the lower bound of Theorem~\ref{th_main} by means of a top-down approach. In order to describe the plan, we need a few definitions which we elaborate on further in the proof. First, call a set of states $K \subseteq Q$
\emph{synchronizable} if it can be mapped to one state by
some word. Next, a pair of states $\{p,q\}$ is
called \emph{stable} if it cannot be mapped by any word into a non-synchronizable pair. Finally, the \emph{underlying graph} of a letter (or mapping) $a$ is a digraph $\Gamma_a = \Gamma(Q, \{(q, q.a) \mid q\in Q\})$. Since it has common out-degree $1$, the underlying graph consists of one or more maximal weakly connected components~\footnote{a maximal by size subgraph containing a state accessible from every state of the subgraph} called \emph{clusters}, each one consisting of a unique cycle and trees rooted on this cycle (see Figure~\ref{fig:underlying_graphs}).

The plan is as follows. After recalling necessary well known inequalities in Subsection~\ref{sub_sec:useful}, we state an upper bound on the number of clusters of a random mapping in Subsection~\ref{sub_sec:cluster_structure}, which will be used throughout the paper. Then, in Subsection~\ref{sub_sec:synch_sets} we prove that having the set of \emph{big} clusters for each letter synchronizable is enough to prove the lower bound. Next, in Subsection~\ref{sub_sec:stable_pairs} we will show that this property can be provided by having a big enough set of \emph{stable pairs} for each letter independent of the other letter. In Subsection~\ref{sub_sec:extending_stable_pairs} we prove that these sets of stable pairs can be built from just one stable pair independent of one of the letters. In Subsection~\ref{sub_sec:one_stable_pair} we will show 
that if the underlying graph of one of the letters has a unique highest \emph{$1$-branch}, then there is a pair of states completely defined by this letter which is additionally stable provided the \emph{crown} of the $1$-branch intersects with any \emph{subautomaton}. Finally, in Subsection~\ref{sub_sec:highest_branch} we prove that with high probability one of the letters of a random automaton has a unique highest $1$-branch whose crown is big enough to be reachable from any subautomaton with high probability.    

\subsection{Useful Asymptotics}
\label{sub_sec:useful}

First, let us recall few asymptotic equations that will be used throughout the paper. The Stirling's approximation formula states that for $k>0$
\begin{equation}
\label{eq:stirling}
\tag{st}
k! = \left(\frac{k}{e}\right)^k \sqrt{2\pi k}(1+o(1)).
\end{equation}
Using Taylor's expansion of a logarithm $\ln(1-x) = -\sum_{i=1}^{+\infty}\frac{x^i}{i}$, for $0 < k < m$ we get
\begin{equation}
\label{eq:exp1}
\tag{exp1}
\left(1-\frac{k}{m}\right)^m = \exp\left({- \sum_{i=1}^{+\infty}\frac{k^i}{i m^{i-1}}}\right).
\end{equation}
\begin{equation}
\label{eq:exp2}
\tag{exp2}
e^{-2k} \leq \left(1-\frac{k}{m}\right)^m \leq e^{-k} \text{ if } k \leq \frac{m}{2}.
\end{equation}
\begin{equation}
\label{eq:exp3}
\tag{exp3}
\left(1-\frac{k}{m}\right)^m = e^{-k(1+o(1))} \text{ if } k = o(m).
\end{equation}
\begin{equation}
\label{eq:exp4}
\tag{exp4}
\left(1-\frac{k}{m}\right)^m = e^{-k}\Theta(1) \text{ if } k^2 = o(m).
\end{equation}

Using (\ref{eq:stirling}), for the number of combinations for $0 < k < m$ we have
\begin{multline}
\label{eq:comb}
\tag{cmb0}
{m \choose k} = \frac{m!}{(m-k)!k!} = \frac{\Theta(1)\left(\frac{m}{e}\right)^{m+1/2}}{\left(\frac{k}{e}\right)^{k+1/2} \left(\frac{m-k}{e}\right)^{m-k+1/2}} = \\ = \frac{\Theta(1){m}^{m}}{{k}^{k}(m-k)^{m-k}}\sqrt{\frac{m}{k(m-k)}} = \frac{\Theta(1){m}^{k}}{{k}^{k}(1-\frac{k}{m})^{m-k}}\sqrt{\frac{m}{k(m-k)}}.
\end{multline}
If $k = o(m)$, then due to (\ref{eq:exp3}), (\ref{eq:comb})  simplifies to 
\begin{equation}
\label{eq:comb1}
\tag{cmb1}
{m \choose k} = \frac{\Theta(1){\left(m e^{1+o(1)}\right)}^{k}(1-\frac{k}{m})^{k}}{{k}^{k}}\sqrt{\frac{1}{k}} = \frac{\Theta(1){\left(m e(1+o(1))\right)}^{k}}{{k}^{k}\sqrt{k}},
\end{equation}
and if $k^2 = o(m)$, due to (\ref{eq:exp4}) it further simplifies to
\begin{equation}
\label{eq:comb2}
\tag{cmb2}
{m \choose k} = \frac{\Theta(1){\left(m e^{1+o(1)}\right)}^{k}(1-\frac{k}{m})^{k}}{{k}^{k}}\sqrt{\frac{1}{k}} = \frac{\Theta(1){\left(m e\right)}^{k}}{{k}^{k}\sqrt{k}}.
\end{equation}
In the general case $0 < k < m$, as $m \leq O(1)k(m-k), (1-\frac{k}{m}) \leq 1$ and due to (\ref{eq:exp2}) we can upper bound (\ref{eq:comb}) as follows.
\begin{equation}
\label{eq:comb3}
\tag{cmb3}
{m \choose k} \leq \frac{\Theta(1){m}^{k}}{{k}^{k}(1-\frac{k}{m})^{m-k}}\sqrt{\frac{m}{k(m-k)}} \leq  \frac{O(1){m}^{k}e^{2k}}{{k}^{k}} = O(1)\left(\frac{e^2m}{k}\right)^{k}.
\end{equation}
Notice that for $k \in \{0,m\}$, (\ref{eq:comb3}) also holds if we assume $0^0 = 1$.

\subsection{The Cluster Structure of Underlying Graphs}
\label{sub_sec:cluster_structure}

An example of an automaton with $4$ states and the underlying graphs of its letters is given in Figure~\ref{fig:underlying_graphs} -- the underlying graph of $b$ (on the right) has only one cluster, while the underlying graph of $a$ (in the middle) consists of $3$ clusters having the sets of vertices $\{0,1\}$, $\{2\}$, $\{3\}$, correspondingly. Clearly, each directed graph with $n$ vertices and constant out-degree $1$ corresponds to a unique map from $\Sigma_n$. Thus we can consider $\Sigma_n$ as the probability space with the uniform distribution on all directed graphs with constant out-degree $1$.

\begin{figure}[h]
\begin{center}
\begin{tikzpicture}[scale=1.5]

\begin{scope}
\node[draw,circle] (p0) at (0,1) {0};
\node[draw,circle] (p1) at (1,1) {1};
\node[draw,circle] (p2) at (1,0) {2};
\node[draw,circle] (p3) at (0,0) {3};

\draw[->,thick] (p0) edge node[above]{$a,b$} (p1);
\draw[->,thick] (p1) edge[loop above] node[above]{$a$} (p1);
\draw[->,thick] (p2) edge[loop below] node[below]{$a$} (p1);
\draw[->,thick] (p3) edge[loop below] node[below]{$a$} (p1);

\draw[->,thick] (p1) edge node[right]{$b$} (p2);
\draw[->,thick] (p2) edge node[below]{$b$} (p3);
\draw[->,thick] (p3) edge node[left]{$b$} (p0);
\end{scope}

\begin{scope}[xshift=2cm]
\node[draw,circle] (p0) at (0,1) {0};
\node[draw,circle] (p1) at (1,1) {1};
\node[draw,circle] (p2) at (1,0) {2};
\node[draw,circle] (p3) at (0,0) {3};

\draw[->,thick] (p0) edge node[above]{} (p1);
\draw[->,thick] (p1) edge[loop above] node[above]{} (p1);
\draw[->,thick] (p2) edge[loop below] node[below]{} (p1);
\draw[->,thick] (p3) edge[loop below] node[below]{} (p1);
\end{scope}

\begin{scope}[xshift=4cm]
\node[draw,circle] (p0) at (0,1) {0};
\node[draw,circle] (p1) at (1,1) {1};
\node[draw,circle] (p2) at (1,0) {2};
\node[draw,circle] (p3) at (0,0) {3};

\draw[->,thick] (p0) edge node[above]{} (p1);
\draw[->,thick] (p1) edge node[right]{} (p2);
\draw[->,thick] (p2) edge node[below]{} (p3);
\draw[->,thick] (p3) edge node[left]{} (p0);
\end{scope}

\end{tikzpicture}

\end{center}
\caption{Left to right: an automaton $\mathfrak{C}_4$, the underlying graphs of its letters $a$ and $b$.\label{fig:underlying_graphs}}
\end{figure}

The number of clusters (or cycles) of a random mapping is known to be concentrated around $\frac{\ln{n}}{2}$ and accurate bounds were established for various rates of growth of the number of clusters (see e.g.~\cite{TIM11}). We need a stronger upper bound for large deviations of the number of clusters.
\begin{lemma}[Nicaud, 2019]
\label{lem_cnt_clusters}With probability
$1-o(\frac{1}{n^4})$, a random digraph from $\Sigma_n$ has at most $5\ln{n}$ clusters.
\end{lemma}
\begin{proof}
We will make use of powerful and beautiful theory of Analytic Combinatorics developed in \cite{AnalyticComb}. 
Let $t_{n,k}$ be the number of mappings from $[n]$ to $[n]$ having $k$ clusters (component).
The associated exponential bivariate generating series defined by
\[
T(z,u) := \sum_{n\geq 0}\sum_{k\geq 0} \frac{t_{n,k}}{n!}z^nu^k
\]
can be written, using the formal method (see e.g.~\cite{RandMappings}):
\begin{equation}
\begin{cases}
\label{eq:Tzu}
T(z,u) = \exp\left(u\log\left(\frac1{1-C(z)}\right)\right), \\
C(z)  = z\exp(C(z))
\end{cases}
\end{equation}
As every coefficient ($t_{n,k}$) of the development of $T(z,u)$ near $(z,u)=(0,0)$ is non-negative, we have, for every $0<\rho<e^{-1}$ and every positive $\tau$:
\[
T(\rho,\tau) = \sum_{n,k}\left([z^nu^k] T(z,u)\right)\rho^n \tau^k \geq \left([z^nu^k] T(z,u)\right)\rho^n \tau^k,
\]
for every $n$ and every $k$. Hence (this is called a saddle-point bound):
\begin{equation}
\label{eq:tnk_upper_bound}
\frac{t_{n,k}}{n!} = [z^nu^k] T(z,u) \leq \frac{T(\rho,\tau) }{\rho^n \tau^k}
\end{equation}
This holds for any $0<\rho<e^{-1}$, any positive $u$ and any $n$ and $k$. We choose
\[
\rho = e^{-1}\left(1-\frac1{2n}\right);\quad k = \lambda_n + i;\quad \lambda_n:=\lceil \lambda \log n\rceil,
\]
for any $i \geq 0$ and some $\lambda>0$ to be fixed later. Due to~(\ref{eq:tnk_upper_bound}), the probability that a random mapping with $n$ states has $k$ components is
\begin{equation}
\label{eq:tnk_original_bound}
\frac{t_{n,k}}{n^n} \leq \frac{n!}{n^n}\frac{T(\rho,\tau) }{\rho^n \tau^k}.
\end{equation}
First observe that, due to (\ref{eq:stirling}) and the choice of $\rho$,
\begin{equation}
\label{eq:exp_part_upper_bound}
\frac{n!}{n^n \rho^n } = \frac{\sqrt{2\pi n}(1+o(1))}{\left(1-\frac{1}{2n}\right)^n} =  \frac{\sqrt{2\pi n}}{e^{-1/2}}(1+o(1)).
\end{equation}
Now we estimate $T(\rho,\tau)$. We use the classical development near $\rho=e^{-1}$ (see ~\cite{RandMappings})
\[
C(z) = 1 - \sqrt{2(1-ez)} + O(1-ez),
\]
and thus $1-C(\rho) = \frac1{\sqrt{n}} + O(\frac{1}{n})$, and thus from (\ref{eq:Tzu}) we have
\begin{equation}
\label{eq:Trhotau_bound}
T(\rho,\tau) =  \exp\left(\tau\log{\frac{\sqrt{n}}{1+O(1/\sqrt{n})}}\right)  = n^{\tau/2(1+o(1/n^{1/2}))}.
\end{equation}
Putting together (\ref{eq:tnk_original_bound}),(\ref{eq:exp_part_upper_bound}) and (\ref{eq:Trhotau_bound}), for some positive $\alpha$, 
\[
\frac{t_{n,k}}{n^n}  \leq \alpha \sqrt{n}\ \frac{n^{\tau/2}}{n^{\lambda\log\tau}} \tau^{-i}
= \alpha\ n^{\frac12(\tau)-\lambda\log\tau} \tau^{-i}
\]
If we choose $\tau = \lambda=5$, we have $\frac12(\tau+1)-\lambda\log\tau \leq -5$,
so that
\[
\frac{t_{n,k}}{n^n}  \leq \alpha n^{-5} 5^{-i+1},
\]
and therefore
\[
\mathbb{P}_{g \in \Sigma_n}(g \text{ has more than }5\log n\text{ clusters})
= \sum_{i\geq 0} \frac{t_{n,\lceil 5\log n \rceil +i}}{n^n}
\leq \frac{\beta}{n^5},
\]
for some positive $\beta$.
\end{proof}

\subsection{Independent Synchronizable Sets}
\label{sub_sec:synch_sets}

A set of states $K \subseteq Q$ is \emph{synchronizable} if it can be mapped to one state by some word. In contrast, a pair of states $\{p,q\}$ is called a \emph{deadlock} if $p.s \neq q.s$ for each word $s$. Notice that a deadlock pair remains deadlock under the action of any word. 

First we aim to show that for proving that $\A$ is
synchronizing with probability $1-O(\frac{1}{n})$, it is enough to show that \whp\ for each letter there is a large synchronizable set of states which is completely defined by this letter. Given $x \in \{a,b\}$ and a constant $0 < \alpha_c < 1$ to be specified later, let $S_x$ be the set of \emph{big} clusters of $\Gamma_x$ -- of size more than $n^{\alpha_c}$ states -- and let $T_x$ be the complement of $S_x$, or equivalently, $T_x$ be the set of \emph{small} clusters -- the clusters containing at most $n^{\alpha_c}$ states. Since $S_x$ and $T_x$ are completely defined by $x$, both are independent of the other letter. 

Due to Lemma~\ref{lem_cnt_clusters}, \whp\ there are at most $5\ln{n}$ clusters in $\Gamma_x$, hence $T_x$ contains at most $5\ln{(n)}n^{\alpha_c}$ states \whp. Given a set of clusters $X$, denote by $\widehat{X}$ the set of states in the clusters of $X$.

\begin{theorem}
\label{thm:big_sync_sets}If $\widehat{S_a}$ and $\widehat{S_b}$ are synchronizable \whp\ for a random automaton $\A$, then $\A$ is synchronizing \whp. 
Formally, $$Pr(\widehat{S_a}, \widehat{S_b} \text{ are synchronizable}) = 1-O\left(\frac{1}{n}\right) \rightarrow Pr(\A \text{ is synchronizable}) = 1-O\left(\frac{1}{n}\right).$$
\end{theorem}
\begin{proof}

\begin{lemma}
\label{lem:two_subsets}
Let $r$, $s$ and $k$ be three positive integers such that $r, s \leq n$ and $k \leq r$. Let $R$ (resp. $S$) be a subset of Q of size $r$ (resp. $s$) taken uniformly at random. If $S$ and $R$ are independent, then
\begin{equation}
Pr(|R \cap S| \geq k) \leq {{s \choose k}{n \choose r-k}}/{{n \choose r}}.    
\end{equation}
In particular, if $|R| = O(1)$
\begin{equation}
\label{eq:two_subsets_const}
    Pr(|R \cap S| \geq k) = O(({s}/{n})^k).
\end{equation}
\end{lemma}
\begin{proof}
As $R$ is independent of $S$, we may assume that $S$ is fixed and we choose $R$ at random. First we choose some $k$ states from $S$ in ${|S| \choose k}$ ways, then we choose the rest $|R|-k$ states of $R$ from all $n$ states in ${n \choose |R|-k}$ ways, and finally divide by ${n \choose |R|}$ -- the total number of ways to choose $R$ from $Q$. This is an upper bound because if the overlap is larger than $k$, there are different choices of overlap that lead to the same choice of $R$. (\ref{eq:two_subsets_const}) easily follows from (\ref{eq:comb2}).
\end{proof}

Now let us upper bound the probability that $\A$ is not synchronizing. 

\begin{myremark}
\label{rem:assumptions}
Since $\widehat{S_a}$ and $\widehat{S_b}$ are synchronizable \whp, we have
\begin{equation}
Pr(\A \text{ is not synch.}) = Pr(\A \text{ is not synch.}; \widehat{S_a}, \widehat{S_a} \text{ are synch.}) + O\left(\frac{1}{n}\right).
\end{equation}
This means we can instead upper bound the probability of the joint event on the right side or, in other words,
assume that $\widehat{S_a}$ and $\widehat{S_b}$ are synchronizing in the proof~\footnote{Note that it is not a conditional probability, but the conjunction of the events which enables us to assume either condition when upper bounding the probability of the other.}. We can do the same with the condition that $\A$ has more than $5\ln{n}$ $a$ and $b$ clusters and hence both $\widehat{S_a}$ and $\widehat{S_b}$ are $n - o(n)$.
\end{myremark}

If $\A$ were not synchronizing, a deadlock pair $\{p,q\}$ would exist. We could also assume $\{p,q\}$ belongs to $a$-cycles, since it can always be mapped to cycles. Given a state $x$ on $a$-cycle, denote by $cycle(x)$ an $a$-cycle containing $x$ and by $s_x$ its size.  

In what follows, we show that one of configurations has to appear for such a pair $\{p,q\}$ to exist and upper bound their probabilities. To simplify notations, denote $\beta = \lceil \frac{2-\alpha_c}{1-\alpha_c} \rceil$. First, we differentiate by whether $p, q$ both belong to the same $a$-cycle and the length of that cycle.

\textbf{1. One cycle of a big size.} $cycle(p) = cycle(q), s_p \geq 2\beta$. Assuming $\widehat{S_b}$ is synchronizing, each deadlock pair must have at least one state in $\widehat{T_b}$. Let us consider all pairs that can be obtained from $\{p,q\}$ on the cycle by applying $a$, that is, $\{p,q\}.a^{i}$ for $i=0, 1, \dots, s_p-1$. In each of these $s_p$ pairs, at least one state must be in $\widehat{T_b}$ and each state on the cycle appears in exactly two pairs. It follows that at least half of the states from $cycle(p)$ must belong to $\widehat{T_b}$. 

We can upper bound the probability of such configuration as follows~\footnote{Note that a configuration itself does not depend on $\{p,q\}$, the pair is only used to show what configurations are possible.}. First we choose an $a$-cluster in at most $5\ln{n}$ ways, which also defines an $a$-cycle, $c_p$ of size $s_p = |c_p|$. Then we choose $z$ in at most $s_p$ ways and a subset $I \subseteq c_p$ of $z$ states in ${s_p \choose z}$ ways. Since this subset is now completely defined by the letter $a$, the probability that $I \subseteq \widehat{T_b}$ is ${|\widehat{T_b}| \choose z}/{n \choose z}$ (since $\widehat{T_b}$ is independent of $a$). Putting all together, the overall probability of such configuration is upper bounded by
\begin{equation}
\label{eq:one_cycle1}
   \max_{s_p = 2\beta}^{n}\max_{z = \lceil \frac{s_p}{2} \rceil}^{s_p} {5(\ln{n}) s_p {s_p \choose z}  {|\widehat{T_b}| \choose z}/{n \choose z}} \leq \max_{z = \lceil \frac{s_p}{2} \rceil}^{s_p} {5(\ln{n}) s_p 2^{s_p} {|\widehat{T_b}| \choose z}/{n \choose z}}.
\end{equation}
Note that we don't multiply by the number of ways to choose $s_p$ but instead taking the maximum, as $s_p$ is defined by the choice of the cluster $c_p$. Using that $|\widehat{T_b}| = o(n)$, one can easily see that the expression under the innermost max operator in (\ref{eq:one_cycle1}) decreases by $z$, and thus the maximum is achieved for $z = \lceil \frac{s_p}{2} \rceil$ and is equal to
\begin{equation}
f_1(s_p) = {5(\ln{n}) s_p 2^{s_p} {|\widehat{T_b}| \choose \lceil \frac{s_p}{2} \rceil}/{n \choose \lceil \frac{s_p}{2} \rceil}}.
\end{equation}
Similarly, one can see that $f_1(s_p)$ is also decreasing by $s_p$ for $n$ big enough. Thus the maximum is achieved for $s_p = 2\beta$ and is upper bounded by 
\begin{equation}
O(1)\ln{n} {5(\ln{n})n^{\alpha_c} \choose \beta}/{n \choose \beta} = O(1)(\ln^3{n}) ( 1 / n^{\lceil \beta (1 - \alpha_c)}) = o(1/n).
\end{equation}

\textbf{2. Two cycles of big size.} $cycle(p) \neq cycle(q), s_p \geq 2\beta, s_q \geq 2\beta$. This case is similar to the one big cycle case as we again define a configuration with a sufficiently big subset of states in $\widehat{T_b}$ and independent of $b$. So the difference is only technical.

Denote by $c_{r,i}$ the $i$-th state on the cycle $cycle(r)$ for some order induced by the cycle, i.e. $c_{r,i}.a = c_{r,i+1 \bmod s_r}$. Let $d$ be the g.c.d. of $s_p$ and $s_q$. Then for some $0 \leq x < d$ and all $0< k_1,k_2, i \in  \{0,1,\dots,d-1\}$, the pairs
\begin{equation}
\label{eq_ddlck_pairs} \{c_{p,(i + k_1 d) \bmod s_p},
c_{q,(x + i + k_2 d) \bmod s_q}\} \text{ are deadlocks}
\end{equation}
because we can get them as images of $\{p,q\}$. It follows that in each of these pairs at least one of the states is in $\widehat{T_b}$ (or $\widehat{S_b}$ is not synchronizing which happens with $O(1/n)$).

Since $k_1,k_2$ are arbitrary in~(\ref{eq_ddlck_pairs}), $\{0,1,\dots,d-1\}$ can be split into two subsets $I_p$ and $I_q$ such that $c_{p,(i + k_1 d) \bmod s_p} \in \widehat{T_b}$ for each $i \in I_p$ and $c_{q,(i + k_1 d) \bmod s_p} \in \widehat{T_b}$ for each $i \in I_q$ for all non-negative integers $k_1, k_2$.  

\begin{itemize}
    \item $25\ln^2{n}$ ways to choose clusters $cycle(p) \neq cycle(q)$; This choice determines their sizes $s_p, s_q$ as well as their g.c.d. $d = \gcd{(s_p, s_q)}$; 
    \item $d$ ways to choose $x$, an alignment of cycles;
    \item For each $k = 0, 1, \dots, d-1$ there are ${d \choose k}$ ways to choose a $k$-subset $I_p$;
\end{itemize}
Notice that with such a choice, there are $k s_p/d$ distinct states $c_{p,(i + k_1 d) \bmod s_p} \in \widehat{T_b}$ and $(d-k) s_q/d$ distinct states $c_{q,(x + i + k_2 d) \bmod s_q} \in \widehat{T_b}$. Denote $z = {\frac{k s_p + (d-k)s_q}{d}}$. We can assume that $s_p \leq s_q$ implying $d \leq s_p \leq z$. Since $\widehat{T_b}$ is independent of $a$ and $I_p$ is completely defined by $a$ and our choice, the probability that the corresponding states from the cycles belong to $\widehat{T_b}$ equals
${|\widehat{T_b}| \choose z}/{n \choose z}$ (this is a particular case of Remark~\ref{lem:two_subsets}). 
Thus the probability of such configuration for $s_p,s_q$ being the cycle lengths of chosen cycles and $k$ being fixed, is at most
\begin{equation}
\label{prob_sb}  25\ln^2({n}) \max_{s_p,s_q}\sum_{k=0}^{d-1} d {d \choose k}{d \choose d-k}{{|\widehat{T_b}| \choose z}}/{{n \choose z}}.
\end{equation}
Denote $f(s_p, s_q, k)$ to be the term under the innermost sum operator. Using that $d \leq z$ and ${d \choose k} = {d \choose n-k} \leq 2^d$, we get
\begin{equation}
\label{eq_fk}
f(s_p, s_q, k) = d {d \choose k}{d \choose d-k}{{|\widehat{T_b}| \choose z}}/{{n \choose z}} \leq z^2 2^{2z}{{|\widehat{T_b}| \choose z}}/{{n \choose z}}.
\end{equation}

Since $|\widehat{T_b}| = O(n^{\alpha_c})$, simple analysis would show that the maximum of (\ref{eq_fk}) is reached for the smallest possible $z$. Since $z \geq s_p \geq 2\beta$, by (\ref{eq:comb2}) we get
\begin{equation}
    f(s_p, s_q, k) \leq O(1)\left(\frac{|\widehat{T_b}|}{n}\right)^{2\beta} \leq O( 1 / n^{2\beta(1-\alpha_c)}) = O(1/n^{2-\alpha_c}).
\end{equation}
Hence (\ref{prob_sb}) is $o(1/n)$ as required.

\textbf{3. One of the cycles is small.} $s_p < 2\beta$. We cannot use the same argument as when both cycles are big, because we cannot obtain a big enough overlap with $\widehat{T_b}$ (even if one of the states belong to a big cycle, it may happen that a smaller cycle is contained in $\widehat{T_b}$ while the bigger one has no overlap).

Note that we can `make' $\{p, q\}$ independent of a letter $b$ as follows. First, we can assume $q$ is the state with the smallest index on its $a$-cycle (otherwise, we can map $\{p,q\}$ using letter $a$ for this to be true). Second, we consider $\{p, q\}$ to be any such possible pair with $s_p < 2\beta$ and $q$ having the smallest index on its $a$-cycle. By doing this, $\{p, q\}$ is defined by a pair of $a$-cycles and a choice of a state on an $a$-cycle (containing $p$), which makes it independent on the choice of $b$. With $o(1/n)$ probability, there are at most $O(2\beta \ln^2{n})$ of such pairs, and we can use the union probability bound, once we establish that the probability for any of such pairs being a deadlock is $o(1/(n\ln^2{n}))$.

\textbf{3.1. No collisions.}
Let $c_p, c_q$ be two $a$-cycles containing states $p, q$ resp. as defined above. Consider the chain of states $(p, q).ba^i$ for $i = 0, 1, \dots, 2\beta$ and let $D$ be the set of these states. Let us upper bound the probability that all the states in $D$ are distinct and not equal $p, q$ (but might have overlap with cycles $c_p, c_q$), and for each $i > 0$, $|\widehat{T_b} \cap \{p, q\}.ba^i| > 0$. Note that if the latter condition doesn't hold, the pair $\{p, q\}$ cannot be a deadlock. The set $D$ is independent of the letter $b$ conditioned on not having overlap with $\{p, q, p.b, q.b\}$. Indeed, $D$ is defined by $a$-transitions only conditioned upon being distinct from  $\{p, q, p.b, q.b\}$. Note that since we have defined some transitions by letter $b$, we cannot assume it is random. However, Remark~\ref{rem:assumptions} still holds. Since $\{p, q, p.b, q.b\}$ is just a constant number and $|\widehat{S_b}| = n-o(n)$, the conditioning doesn't change the asymptotic below (the worst case is when all $\{p, q, p.b, q.b\}$ are in $\widehat{S_b}$).

By Lemma~\ref{lem:two_subsets}, we get that for each two cycles $c_p, c_q$ with $s_p < 2\beta$ and each $p,q$ chosen as described above,
\begin{align}
\label{eq:pr_chain_overlap}
    Pr(|\widehat{T_b} \cap \{p, q\}.ba^i| > 0, |\{p, q, p.b, q.b\} \cap D| = 0, |D| = 2(2\beta + 1) \mid c_p, c_q, p, q) &\\ 
    \leq Pr(|\widehat{T_b} \cap D | \geq 2\beta, |D| = 2(2\beta + 1) \mid c_p, c_q, p, q) & \\ \leq \left(\frac{|\widehat{T_b}|}{n}\right)^{2\beta} = O(1/n^{2-\alpha_c}).
\end{align}

Now, for a random letter $a$, we can take all the pairs $\{p, q\}$ such that $p$ is on a $a$-cycle with $s_p < 2\beta$ and $q$ is defined as above to make $\{p, q\}$ independent of $b$. With $1-o(1/n^4)$ probability, there are at most $O(\beta\ln^2{n})$ of such pairs. Thus by (\ref{eq:pr_chain_overlap}) we upper bound the union probability of that configuration by $o(1/n)$.

The remaining configurations will have either \emph{one collision} or \emph{two collisions}. First, if there is a loop for both letters for either $p$ or $q$, then due to Lemma~\ref{lem:one_subautomaton} this is the only strongly connected subautomaton with probability $1-O(1/n)$ and thus $\A$ is synchronizing. Note that this configuration doesn't depend on $p,q$ (whether an automaton has a one state loop by both letters), hence we don't need to multiply by the number of possible pairs.

\textbf{3.2. Two collisions.}
Now, if $p.b \in \{q\} \cup c_p, q.b \in \{q\} \cup c_p$, we have two collisions and since $b$-transitions have not been defined for $p, q$, this happens with probability $O(1/n^2)$ and we are done.

\textbf{3.3. One collision, $p.b = q, q.b \not\in \{p,q\}$.}
Let us consider $p.b = q, q.b \not\in \{p,q\}$. Since $\{p, q\}$ is a deadlock, one of $p, q$ must be in $\widehat{T_b}$.

So, for each $\{p,q\}$ and $c_p,c_q$ defined above, we consider the probability 
\begin{align}
\label{eq:p_maps_to_q_by_b}
Pr(p.b = q ; \{p,q\} \cap \widehat{T_b} \neq \emptyset \mid p,q,c_p,c_q) &=  \\ = Pr(p.b = q \mid p,q,c_p,c_q)Pr(\{p,q\} \cap \widehat{T_b} \neq \emptyset \mid p.b = q, p,q,c_p,c_q).
\end{align}

Note that since $p,q,c_p,c_q$ are defined by $a$, 
\begin{equation}
Pr(p.b = q \mid p,q,c_p,c_q) = Pr(p.b = q) = O(1/n).
\end{equation}

Since $\widehat{T_b}$ is defined by $b$, we have
\begin{align}
    Pr(\{p,q\} \cap \widehat{T_b} \neq \emptyset \mid p.b = q, p,q,c_p,c_q) = Pr(\{p,q\} \cap \widehat{T_b} \neq \emptyset \mid p.b = q).
\end{align}
Now, let us show that  
$Pr(\{p,q\} \cap \widehat{T_b} \neq \emptyset \mid p.b = q) \leq Pr(q \in \widehat{T_b})$.
To show this, let's take a random letter $b$ and then redefine $p.b = q$. Since $b$ is defined by choosing each transition independently, this would give us exactly the distribution for a random letter $b$ with $p.b = q$. If either $p$ or $q$ is in $\widehat{T_b}$ after we redefined $p.b = q$, $q$ must have been in $\widehat{T_b}$ before we redefined $p.b = q$ (since the redefinition may have only increased the size of the $b$-cluster containing $q$). The inequality follows.

Since $q$ is independent of $b$, by Remark~\ref{rem:assumptions} we get $Pr(q \in \widehat{T_b}) \leq O(1/n^{1-\alpha_c})$. This, combined with~(\ref{eq:p_maps_to_q_by_b}) gives us the required upper bound for a given pair $p,q$.

\textbf{3.4. One or more collisions, $|\{p, q, p.b, q.b\}| = 4$.}
The remaining case is when $|\{p, q, p.b, q.b\}| = 4$. At the same time, we're not in the \textbf{3.1.} case. Hence for some $0 \leq j \leq i \leq 2\beta$, w.l.o.g. $p.ba^i \in \{p,q\}.ba^j$. Then, we can consider the chain $(p, q).b^2a^i$ for $i = 0, 1, \dots, 2\beta$ . In case, this chain has no collisions with itself or $(p, q).ba^i$ for $i = 0, 1, \dots, j$, we can use the same reasoning as in the \textbf{3.1.} case (since the number of states \emph{to be avoided} is still constant) and independence arguments are the same. Otherwise, if there is a second collision, since the two collisions are independent (defined by transitions from two different states), we easily get $O(1/n^2)$ upper bound on the probability for a given $p,q, c_p, c_q$. This completes the proof of the theorem.
\end{proof}

\subsection{Stability Relation and Induced Colouring}
\label{sub_sec:stable_pairs}

In view of Theorem~\ref{thm:big_sync_sets}, it remains to prove
that $\widehat{S_a}$ and $\widehat{S_b}$ are synchronizable
\whp. For this purpose, we use the notion of the
\emph{stability} relation introduced by
Kari~\cite{KariStable02}. Recall that a pair of states $\{p,q\}$ is called \emph{stable} if for every word $u$ there is a word
$v$ such that $p.uv=q.uv$. The \emph{stability} relation
given by the set of stable pairs joined with a diagonal set $\{\{p,p\} \mid p \in Q\}$ is invariant under the
actions of the letters and complete whenever $\A$
is synchronizing. It is also an equivalence relation on $Q$ because it is transitive and symmetric.

Given a pair $\{p,q\}$, either $\{p,q\}$ is in one $a$-cluster
or the states $p$ and $q$ belong to different $a$-clusters.
In the latter case, we say that $\{p,q\}$ \emph{connects}
these $a$-clusters. Suppose there exists a \emph{big enough} set
$Z_a$ of distinct pairs (namely, $|Z_a| \geq n^{\beta_s}$ where $1-\alpha_c < \beta_s < 0.5$ will be defined later) that are independent of
$a$ and stable \whp. Consider the graph
$\Gamma(S_a,Z_a)$ with the set of vertices $S_a$, and draw an edge between two clusters if and only if some pair
from $Z_a$ connects them.

\begin{lemma}
\label{lem_S_conn_clusters}Let $Z_a$ be a set of at least $n^{\beta_s}$ distinct pairs independent of $a$; then
$\Gamma(S_a,Z_a)$ is connected \whp. If additionally all cycle pairs of one of the clusters from $S_a$ are stable, then $\widehat{S_a}$ is synchronizable~\footnote{In particular, the existence of a loop among cycles of $S_a$ is enough.}.
\end{lemma}
\begin{proof}
The latter statement follows from the definition of $S_a$ and the transitivity of the stability relation. Indeed, if $\Gamma(S_a,Z_a)$ is connected, all cycle pairs of the cycles of $S_a$ are stable. Since each pair of $S_a$ can be mapped to a cycle pair of $S_a$, $\widehat{S_a}$ is synchronizable.

Let us turn to the first statement. Since $Z_a$ is
independent of $a$, we can choose $Z_a$ uniformly at random for a given random mapping $a$, and estimate the probability that $\Gamma(S_a,Z_a)$ is not connected for that choice. The choice of $Z_a$ can be done as follows. We first choose $2|Z_a|$ states and then randomly join different pairs of chosen states.

Arguing by contradiction, suppose that there is a set of
clusters $S' \subsetneq S_a$ such that for $G = union(S')$, we have $|G| \leq 0.5n$ and each pair $\{p,q\} \in Z_a$ either belongs to $G$ or does not intersect with $G$. Notice also that $|G| > n^{\alpha_c}$, because $G$ must contain at least one cluster from $S_a$ and all clusters in $S_a$ contain more than $n^{\alpha_c}$ states.

Denote $m = |Z_a|$. Let $k_1$ pairs from $Z_a$ belong to $G$ and $k_2 = m - k_1$ pairs do not belong to $G$. The probability of such event is at most, $g$ being the size of $G$,
\begin{equation}
\label{eq:conn_comb1}
  2^{5\ln{n}} \frac{{g \choose 2k_1} {n-g \choose 2k_2} (2k_1)!! (2k_2)!!}{{n \choose 2m} (2m)!!}(1+o(1)).  
\end{equation}
Indeed, due to Lemma~\ref{lem_cnt_clusters}, \whp\  we can choose $G$ (as a subset of clusters) in at most $2^{5\ln{n}}$ ways. Then we choose $2k_1$ states from $G \cap Z_a$ in ${g\choose 2k_1}$ ways and a perfect matching on them in $(2k_1)!!$ ways. 
Similarly, we choose $k_2$ pairs from $(Q \setminus G) \cap Z_a$ in ${n-g \choose 2k_2} (2k_2)!!$ ways; next we divide it by the total number of ways to choose $|Z_a|$ pairs ${n \choose 2m} (2m)!!$.

Using that $(2x)!!= \prod_{i=1}^x (2i) = 2^x x!
$ and ${x \choose y} = \frac{x!}{(x-y)!y!}$, (\ref{eq:conn_comb1}) can be upper bounded as follows
\begin{multline}
\label{eq:conn_fact}
O(1)n^5\frac{g!(n-g)!(n-2m)!
(2m)!}{n!(2k_1)!(g-2k_1)!(2k_2)!(n-g-2k_2)!}\frac{2^{k_1}
k_1! 2^{k_2} k_2! }{2^{m}m!} = \\ = O(1)n^5\frac{g!(n-g)!(n-2m)!
(2m)!}{n!(2k_1)!(g-2k_1)!(2k_2)!(n-g-2k_2)!}\frac{
k_1! k_2! }{m!}.    
\end{multline}
As $n-g > g$, it can be easily shown that (\ref{eq:conn_fact}) decreases by $k_1$ provided $m = k_1 + k_2$ is fixed.
For $k_1 = 0$, (\ref{eq:conn_fact}) is reduced to 
\begin{equation}
\label{eq:k1zero}
O(1)n^5\frac{g!(n-g)!(n-2m)!(2m)!}{n!g!(2m)!(n-g-2m)!} = O(n^5)\frac{(n-g)!(n-2m)!}{n!(n-g-2m)!},
\end{equation}
Using Stirling's formula (\ref{eq:stirling}) for (\ref{eq:k1zero}), we get 
\begin{equation}
\label{eq:k1zero_2}
O(n^5)\frac{(n-g)!(n-2m)!}{n!(n-g-2m)!} = O(n^5)\frac{(n-g)^{n-g}(n-2m)^{n-2m}}{n^n (n-g-2m)^{n-g-2m}} \sqrt{\frac{(n-g)(n-2m)}{n(n-g-2m)}}.
\end{equation}
Notice that for $x^2 = o(z)$, we have that 
\begin{equation}
\label{eq:power}
(1-\frac{x}{z})^{z-x} = e^{-x}\frac{1}{(1-\frac{x^2}{z})} = e^{-x}(1+o(1))
\end{equation}
Hence using that $m^2=o(n)$ and $n-g \geq 0.5n$, we can simplify (\ref{eq:k1zero_2}) as
\begin{equation}
\label{eq:k1_zero_3}
O(n^5)\frac{(n-g)^{2m}(1-\frac{2m}{n})^{n-2m}}{n^{2m} (1-\frac{2m}{n-g})^{n-g-2m}} = O(n^5)(1-\frac{g}{n})^{2m} \leq O(n^5)(1-\frac{n^{\alpha_c}}{n})^{2m}. 
\end{equation}
Finally, using Taylor's expansion for natural logarithm, we upper bound (\ref{eq:k1_zero_3}) as 
\begin{equation}
\label{eq:k1_zero_4}
O(n^5)e^{2m\ln(1-n^{\alpha_c-1})} = O(n^5)e^{-2n^{\beta_s} n^{\alpha_c-1}(1+o(1))}.    
\end{equation}
It remains to recall that $\beta_s + \alpha_c-1 > 0$, and thus (\ref{eq:k1_zero_4}) is $o(\frac{1}{n^r})$ for any $r>0$.
\end{proof}

When a letter is clear from the context, we call a pair of states \emph{cycle pair} if both states belong to (possibly different) cycles in the underlying digraph of the letter.
\begin{lemma}
\label{lem_stable_cluster}
Suppose there exists a set of at least $n^{\beta_s}$ distinct pairs $Z_a$ independent of $a$, which are stable \whp\ and $\Gamma(S_a,Z_a)$ is connected; then \whp\ cycles from $S_a$ are all stable, and thus $\widehat{S_a}$ is synchronizable.
\end{lemma}
\begin{proof}
Let $n_s$ denote the number of clusters in $S_a$ and $z$ denote the number of pairs in $Z_a$. Suppose the cycles of $S_a$ have states from exactly $d$ stable classes. We have to upper bound the probability that $d>1$. Using that the stability relation is an equivalence and that the graph of cycles is connected by stable pairs, we have the following properties.
\begin{enumerate}
	\item \label{it:all_colours} All cycles have states from each of $d$ stability classes. 
Indeed, suppose a stable pair $\{p,q\}$ has $p$ in one cycle and $q$ in another. When $p.a^{i}$ for $i>0$ runs through all the classes of one cycle, $q.a^{i}$ must run through the same classes in the other cycle. The statement follows from connectivity now.
	\item \label{it:divides_d} The length of all cycles form $S_a$ is divided by $d$ (follows from (\ref{it:all_colours})).
	\item \label{it:col_alt} We can enumerate these stable classes from $0$ to $d-1$ such that the class indexed $i$ is mapped to $(i + 1 \mod d)$ by $a$.
\end{enumerate}
Let us colour each cycle state from $S_a$ in the colour corresponding to its stable class. We extend this colouring to the whole $\widehat{S_a}$ as follows. We colour a state $p$ in the same colour as the (cycle) state $p.a^{nd}$. With this definition, we have that (\ref{it:col_alt}) holds for all states from $\widehat{S_a}$.

\begin{myremark}
\label{rem:monochrome}
The action of $a$ maps a pair $\{p,q\}$ in a monochrome pair if and only if $\{p,q\}$ is monochrome.
\end{myremark}
\begin{proof}
Indeed, let $\{p,q\}$ be a cycle pair. If $\{p.a,q.a\}$ is monochrome, then it is stable, whence $\{p.a^{t},q.a^{t}\}$ is stable and monochrome for all $t > 0$. If we take $t$ being the product of all cycles lengths, we get a pair $\{p,q\}$, which thus must be monochrome. For non-cycle pairs, the statement follows from the fact that $a^{nd}$ is homomorphic map into the set of cycle states. 
\end{proof}
It follows from the definition of the colouring and Remark~\ref{rem:monochrome} that each stable pair in $\widehat{S_a}$ must be monochrome. Denote by $s_i$ the number of pairs $Z_a$ coloured by $i$ for $i = 0,1, \dots, d-1$. Without loss of generality, suppose $s_0$ is maximal among $s_i$.

Since $Z_a$ is independent of $a$, we can choose $Z_a$ uniformly at random for a given random mapping $a$ and estimate the probability that $Z_a$ yields a colouring into $d>1$ colours. The choice of $Z_a$ can be done in two stages. We first randomly choose $2z$ states and then randomly choose a perfect matching on the chosen states. 

\textbf{Case 1.} $s_0 \leq \gamma z$ for $0.9< \gamma<1$. 
Consider the probability that for a chosen set of states in $Z_a$, there is a colouring (satisfying above properties) and a perfect matching on its states such that all pairs of the matching are monochrome. The probability is upper bounded by
\begin{equation}\label{eq_matches}
n d^{n_s} \frac{\prod_{i=0}^{d-1}(2s_i)!!}{(2z)!!} =
n d^{n_s} \frac{\prod_{i=0}^{d-1}(s_i)!}{z!}
\end{equation}
Indeed, first we choose $d$ in at most $n$ ways, then we determine a colouring for each of $n_s$ clusters by choosing a colour of one of its cycle states in $d$ ways (other colours are uniquely determined according to (\ref{it:col_alt})), then for each colour $i$ we choose a perfect matching in $(2s_i)!!$ ways.
Finally, we divide it by the total number $(2z)!!$ of
all perfect matchings on all $2z$ states from $Z_a$. Notice that we do not have to choose the values of $s_i$ in this case, because they are defined by the colouring and the choice of the set of states in $Z_a$.

Also the number of perfect matchings is constant for every choice of a set of states, so we don't need to multiply (and divide) by the number of possible choices of states from $Z_a$. 


As $(k_1 + k_2)! \geq k_1! k_2!$ One can easily observe that the maximum of the right hand side of (\ref{eq_matches}) (for $d>0$) is reached with the smallest number of non-zero values among $s_i$. As $s_0$ is the largest among $s_i$, (\ref{eq_matches}) is upper bounded by
\begin{multline}
 n d^{n_s} \frac{s_0!(z-s_0)!}{z!} = n d^{n_s} / {z \choose s_0} = [\text{due to (\ref{eq:comb})}] \\ = n d^{n_s} \frac{(\gamma z)^{\gamma z} ((1-\gamma) z)^{(1-\gamma) z}}{z^{\gamma z}} O(\sqrt{z}) = [\text{$n^{\beta_s} \leq z \leq n, d \leq n, 0 < \gamma < 1$}] \\ = O(n^{5\ln{n}+2}) \gamma^{\gamma n^{\beta_s}} =  O\left(\frac{1}{n}\right).
\end{multline}

\textbf{Case 2.}
$s_0 \geq \gamma z$. Let $\omega_0$ be the
total number of $0$-coloured states in $\widehat{S_a}$. First,
consider the case $\omega_0 \leq 0.9 \omega$, where
$\omega$ is the number of states in $\widehat{S_a}$. The
probability of such colouring is at most
\begin{equation}
\label{eq:case2_1}
z n d^{n_s} {{\omega_0 \choose 2s_0}{n - \omega_0 \choose 2(z - s_0)}}/{{n  \choose 2z}}.
\end{equation}
Indeed, first we choose $s_0$ in at most $z$ ways, then we choose $d$ in at most $n$ ways and determine a colouring in $d^{n_s}$ ways (as in \textbf{Case 1}). With this choice being made, as $Z_a$ is independent of $a$, it is also independent of the set of $0$-coloured states within this colouring. Thus the probability that these two sets has $2s_0$ states in overlap is at most, as in Lemma~\ref{lem:two_subsets}, 
$${{\omega_0 \choose 2s_0}{n - \omega_0 \choose 2(z - s_0)}}/{{n  \choose 2z}}.$$
If $z = s_0$, (\ref{eq:case2_1}) can be upper bounded as 
\begin{multline}
n^{1+\beta_s} e^{5\ln^2{n}} {{\omega_0 \choose 2s_0}}/ {{n  \choose 2s_0}} \leq  
n^{(5\ln{n} + 2)}\frac{\omega_0!}{(\omega_0-2s_0)!}\frac{(n-2s_0)!}{n!} \leq \\ \leq n^{(5\ln{n} + 2)} \left(\frac{\omega_0}{n-2s_0}\right)^{2s_0} \leq n^{(5\ln{n} + 2)} \left(0.1+o(1)\right)^{2\gamma n^{\beta_s}} = o(1/n).
\end{multline}

For $z > s_0$, since $z^2 = o(n)$ and $n-\omega_0 \geq 0.1n$, by (\ref{eq:comb2}), (\ref{eq:case2_1}) can be upper bounded as 
\begin{multline}
\label{eq:case2_2}
n^{1+\beta_s} d^{5\ln{n}} {{\omega_0 \choose 2s_0}\left(\frac{e(n - \omega_0)}{2(z - s_0)}\right)^{2(z - s_0)}} \sqrt{z}/\left(\frac{e n}{2z}\right)^{2z} \leq \\ \leq 
n^{(5\ln{n} + 3)} {\omega_0 \choose 2s_0} / \left(\frac{e n}{2z}\right)^{2s_0} = n^{(5\ln{n} + 3)} \left(\frac{z \omega_0}{e s_0 n}\right)^{2s_0}\frac{\sqrt{\omega_0}}{(1-\frac{2s_0}{\omega_0})^{\omega_0-2s_0}}.
\end{multline}
Now, if $\omega_0 \geq \sqrt{n}$ then $s_0 = o(\omega_0)$ and thus $(1-\frac{2s_0}{\omega_0})^{\omega_0-2s_0} = e^{2s_0(1+o(1))}$. Hence (\ref{eq:case2_2}) can by upper bounded by
\begin{multline}
 n^{(5\ln{n} + 4)}\left(\frac{z\omega_0(1+o(1))}{n s_0}\right)^{2s_0} \leq n^{(5\ln{n} + 4)}\left(\frac{0.9 (1+o(1))}{ \gamma }\right)^{2\gamma n^{\beta_s}} = o\left(\frac{1}{n}\right).
\end{multline}
If $\omega_0 < \sqrt{n}$, as in (\ref{eq:comb3}), we have that $(1-\frac{2s_0}{\omega_0})^{\omega_0-2s_0} \leq e^{4s_0}$, and (\ref{eq:case2_2}) can by upper bounded as
\begin{multline}
 n^{(5\ln{n} + 4)}\left(\frac{e z\omega_0}{n s_0}\right)^{2s_0} \leq n^{(5\ln{n} + 4)}\left(\frac{e}{ \gamma \sqrt{n} }\right)^{2\gamma n^{\beta_s}} = o\left(\frac{1}{n}\right).
\end{multline}

\textbf{Case 2.1.} $\omega_0 > 0.9 \omega$. The probability of a corresponding colouring for this case is at most
\begin{equation}
\label{eq_omega_0} \frac{\omega {\omega \choose \omega_0}
(\omega-\omega_0)^{\omega_0}
{\omega}^{\omega-\omega_0}}{{\omega}^{\omega}}.
\end{equation}
Indeed, first we choose $\omega_0$ in less than $\omega$
ways, and then we choose a subset of $0$-coloured states in
${\omega \choose \omega_0}$ ways. Then for each of
$0$-coloured state we choose a non $0$-coloured image in
$\omega-\omega_0$ ways (the colour of the image must be
equal to $d-1 \neq 0$), and for the remained
$\omega-\omega_0$ states we choose an arbitrary image in
$\omega$ ways, finally we divide it by ${\omega}^{\omega}$, the total number of ways to choose images for $\omega$ states in $\widehat{S_a}$. Using (\ref{eq:comb}) and monotonic descending of (\ref{eq_omega_0}) by $\omega_0$ (for $\omega_0 > 0.5\omega$), (\ref{eq_omega_0}) is upper bounded by
\begin{multline}
\frac{\omega {\omega \choose 0.9\omega}
(0.1\omega)^{0.9\omega}
{\omega}^{0.1\omega}}{{\omega}^{\omega}} \leq
\frac{O(\omega) \omega^{\omega}
(0.1\omega)^{0.9\omega}
{\omega}^{0.1\omega}}{{(0.9\omega)}^{0.9\omega}{(0.1\omega)}^{0.1\omega}{\omega}^{\omega}}
 \leq \\ \leq 
 \frac{O(\omega) 
(0.1)^{0.9\omega}}{{(0.9)}^{0.9\omega}{(0.1)}^{0.1\omega}}
 = O(\omega)\left(\frac{10^{0.1}}{9^{0.9}}\right)^{\omega} \leq O(n)10^{(0.1 - 0.9\log_{9}{10})\Theta(n)} = o({1}/{n}).
\end{multline}
This completes the proof of the lemma.
\end{proof}

\subsection{Searching for more Stable Pairs}
\label{sub_sec:extending_stable_pairs}

Due to results of Subsection~\ref{sub_sec:stable_pairs} and Theorem~\ref{thm:big_sync_sets}, it remains to prove that there exists such $Z_a$ and $Z_b$ \whp. We now prove that \whp\ we can expand one stable pair which is independent of one of the letters to get enough such pairs for each of the letters.

\begin{lemma}
\label{lem_const_stable} Let $\{p,q\}$ be a pair defined by the letter $a$ (thus independent of $b$); then for any constant $k>0$ \whp\  we can expand it to a set of $k$ distinct stable pairs independent of $a$.
\end{lemma}
\begin{proof}
Consider the chain of states $p.b,q.b, \dots
p.b^{k+1},q.b^{k+1}.$ Since $\{p,q\}$ is independent of
$b$, the probability that all states in this chain are different is
$$\left(1-\frac{2}{n}\right)\left(1-\frac{3}{n}\right)\dots \ \left(1-\frac{2k+3}{n}\right) \geq \left(1-\frac{2k+3}{n}\right)^{2(k+1)} = 1-O\left(\frac{1}{n}\right).$$
Since $\{p,q\}$ is independent of $b$, each pair $\{p.b^i,q.b^i\}$ for $1 \leq i \leq k+1$ in this chain is independent of $a$.
\end{proof}

\begin{lemma}
\label{lem_many_stable} Let for some $0 < \epsilon < 0.5$, $\{p_1, q_1\}, \dots, \{p_k, q_k\}$ for $k=\lceil \frac{1}{2\varepsilon} \rceil$ be distinct  pairs independent of $b$; then \whp\  we can expand them to $f = \lceil n^{0.5-\epsilon} \rceil$ distinct pairs independent of $a$.
\end{lemma}
\begin{proof}
We start with the first pair $\{p_1,q_1\}$ and build a chain 
$$p_1.b,q_1.b,p_1.b^2, q_1.b^2 \dots $$ 
until a next state coincide with any state already in the chain or in $k$ given pairs. If it happens, we take the second pair $\{p_2, q_2\}$ and continue in the same fashion. We stop as soon as we have $2f$ states in this chain or we ran out of $k$ given pairs. The probability that the latter happens, $f_i$ being the number of steps successfully made before taking $(i+1)$-th pair, can be upper bounded by, as $\frac{(2f_i+2k)}{n}$ is the upper bound of `collision' with previous states for $i$-th pair, 
\begin{equation}
\frac{(2f_1+2k)}{n}\frac{(2f_2+2k)}{n}\dots \frac{(2f_k+2k)}{n}
\leq \left(\frac{2(f+k)}{n}\right)^k = \frac{O(1)f^k}{n^{k}}.
\end{equation}
As there are ${f \choose k} = O(f^k)$ ways to choose values of $f_i$, the final bound is
\begin{equation}
\frac{O(1)f^{2k}}{n^{k}} = O\left(\frac{1}{n^{2k\varepsilon}}\right) = O(1/n),
\end{equation}

which proves the lemma as pairs in the built chain are independent of $a$ by construction.
\end{proof}

\begin{theorem}
\label{th_many_stable_ext}
Let $\{p,q\}$ be a pair defined by the letter $a$ (thus independent of $b$); Then for any $\beta < 0.5$, then 
\whp\ for each letter $x \in
\{a,b\}$, we can expand from this pair to a set of at least $n^{\beta}$ distinct pairs independent of $x$.
\end{theorem}
\begin{proof}
We apply Lemma~\ref{lem_const_stable} to $\{p,q\}$ to expand to a set $I$ of $k$ distinct pairs independent of $a$. Then, we apply Lemma~\ref{lem_const_stable} to the first pair of $I$ to get a set of $k$ distinct pairs independent of $b$. Thus, we can apply Lemma~\ref{lem_many_stable} to either of $a$ or $b$ and get $\lceil n^{\beta} \rceil$ pairs independent of the other letter.
\end{proof}
As Theorem~\ref{th_many_stable_ext} works for any $\beta < 0.5$, we can let $\beta_s = 0.45, \alpha_c=0.95$ to satisfy all constraints.

\subsection{Highest Trees and Stable Pairs}
\label{sub_sec:one_stable_pair}

To use Theorem~\ref{th_many_stable_ext} we need to find a stable pair completely defined by one of the letters whence independent of the other one. For this purpose, we reuse ideas from Trahtman's solution~\cite{TRRCP08} of the famous \emph{Road Colouring Problem}. A subset $A \subseteq Q$ is called an $F$-clique of $\A$ if it is a maximal by inclusion set of states such that each pair of states from $A$ is a deadlock. It is proved in Trahtman~\cite{TRRCP08} that if $\A$ is strongly connected then all $F$-cliques have the same size. We also need to reformulate~\cite[Lemma~2]{TRRCP08} for our purposes.
\begin{lemma}
\label{lem_tr1} If $A$ and $B$ are two distinct $F$-cliques in a strongly connected automaton such that $A \setminus B = \{p\}, B \setminus A = \{q\}$ for some states $p,q$; Then $\{p,q\}$ is a stable pair.
\end{lemma}
\begin{proof}
Arguing by contradiction, suppose there is a word $u$ such
that $\{p.u,q.u\}$ is a deadlock. Then $(A \cup B).u$ is an
$F$-clique because all pairs are deadlocks. Since $p.u \neq
q.u$, we have $|(A \cup B).u| = |A| + 1 > |A|$ contradicting the maximality of $A$.
\end{proof}

Given a digraph $g \in \Sigma_n$ and an integer $c>0$, call
a \emph{$c$-branch} of $g$ any (maximal) subtree of a tree of $g$ with the root of height~\footnote{The height of a vertex in a tree is the number of edges from the vertex to the root.} $c$. 
For instance, the trees are
exactly $0$-branches. Let $T$ be a highest $c$-branch of
$g$ and $h$ be the height of the second highest
$c$-branch. Let us call the $c$-\emph{crown} of $g$ the
(possibly empty) forest consisting of all the states of
height at least $h+1$ in $T$. For example, the digraph $g$
presented on Figure~\ref{fig:highest_tree} has two highest $1$-branches rooted in states $6,12$. Without the state $14$, the digraph $g$ would have the unique highest $1$-branch rooted at state $6$, having the state $8$ as its $1$-crown.

\begin{figure}[h]
\begin{center}
\begin{tikzpicture}[scale=1.5]
\node[draw,circle] (p1) at (0,1) {1};
\node[draw,circle] (p2) at (1.1,0) {2};
\node[draw,circle] (p3) at (0.7,-1) {3};
\node[draw,circle] (p4) at (-0.7,-1) {4};
\node[draw,circle] (p5) at (-1.1,0) {5};
\node[draw,circle] (p11) at (-1.2,1) {11};
\node[draw,circle] (p10) at (2.1,-1) {10};
\node[draw,circle] (p6) at (2.1,0) {6};
\node[draw,circle] (p12) at (1.1,1) {12};
\node[draw,circle] (p13) at (2.1,1) {13};
\node[draw,circle,dashed] (p14) at (3.1,1.5) {14};
\node[draw,circle] (p7) at (3.1,0.5) {7};
\node[draw,circle] (p8) at (4.1,0.5) {8};
\node[draw,circle] (p9) at (3.1,-0.5) {9};

\draw[->,thick] (p1) edge (p2);
\draw[->,thick] (p2) edge (p3);
\draw[->,thick] (p3) edge (p4);
\draw[->,thick] (p4) edge (p5);
\draw[->,thick] (p5) edge (p1);
\draw[->,thick] (p11) edge (p1);
\draw[->,thick] (p10) edge (p3);
\draw[->,thick] (p12) edge (p2);
\draw[->,thick] (p13) edge (p12);
\draw[->,dashed] (p14) edge (p13);
\draw[->,thick] (p8) edge (p7);
\draw[->,thick] (p7) edge (p6);
\draw[->,thick] (p6) edge (p2);
\draw[->,thick] (p9) edge (p6);

\end{tikzpicture}
\end{center}
\caption{A digraph with one cycle and a
unique highest tree.\label{fig:highest_tree}}
\end{figure}

The following theorem is an analogue of Theorem~2
from~\cite{TRRCP08} for $1$-branches instead of trees and a
relaxed condition on the connectivity of $\A$.
\begin{theorem}
\label{th_1stable}Suppose the underlying digraph of the letter $a$ has a unique highest $1$-branch $T$ and its $1$-crown intersects with some (strongly-connected) subautomaton. Denote by $r$ the root of $T$ and by $q$ the predecessor of the root of the tree containing $T$ on the $a$-cycle~\footnote{It follows that both $q.a, r.a$ coincides with the root of the tree containing $T$.}. Then $\{r,q\}$ is stable and independent~\footnote{Stability of the pair depends on $b$ because whether $1$-crown is reachable depends on $b$, but the pair is defined solely by $a$ and thus independent of $b$. In what follows, for simplicity we write `stable pair independent of'.} of $b$.
\end{theorem}
\begin{proof}
Let $p$ be some state in the intersection of $1$-crown of $T$ and a subautomaton $\B$. Then $q$ and all states from the corresponding cycle by $a$ belong to $\B$ because they are reachable from $p \in \B$. If $\B$ is synchronizing, we are done. Otherwise, there is an $F$-clique $F_0$ of size at least $2$, and since $\B$ is strongly connected, there is another $F$-clique $F_1$ containing $p$. Since $p \in F_1$, $F_1\cap T$ is not empty. Let $g$ be a state with maximal height $h$ from $F_1\cap T$. As $F_1$ is an $F$-clique, all other states from $F_1\cap T$ must have smaller height, as otherwise $a^h$ would merge some pair in $F_1$.

Let us consider the $F$-cliques $F_2 = F_1.a^{h-1}$ and $F_3 = F_2.a^{L}$ where $L$ is the least common multiplier of all cycle lengths in $\Gamma_a$. By the choice of $L$ and $F_2$, we have 
$$F_2 \setminus F_3 = \{g.a^{h-1}\} = \{r\}
\text{ and } F_3 \setminus F_2 = \{q\}.$$ Hence, by
Lemma~\ref{lem_tr1} the pair $\{r,q\}$ is stable. Since
this pair is completely defined by the unique highest $1$-branch of $a$ and the letters are chosen independently, this pair is independent of $b$.
\end{proof}

\subsection{Finding Unique Highest Reachable 1-branch}
\label{sub_sec:highest_branch}

Due to Theorems~\ref{thm:big_sync_sets},\ref{th_many_stable_ext} and Lemmas~\ref{lem_S_conn_clusters},\ref{lem_stable_cluster}, it remains to show that we can use Theorem~\ref{th_1stable}, that is, \whp\  the underlying digraph of one of the letters has a unique highest $1$-branch and the $1$-crown of this $1$-branch is reachable from
$F$-cliques (if $F$-cliques exist). The crucial idea in the
solution of the Road Coloring Problem~\cite{TRRCP08} was to
show that each \emph{admissible} digraph can be
\emph{coloured} into an automaton satisfying the above property (for trees) and then use Theorem~\ref{th_1stable}
to reduce the problem. In order to apply Theorem~\ref{th_1stable}, we need the probabilistic version of the combinatorial result from~\cite{TRRCP08}.
\begin{theorem}
\label{th_high_tree} Let $g \in \Sigma_n$ be a random digraph, $c>0$ be a constant, and $H$ be the $c$-crown of $g$ having $r$
roots. Then $|H| > 2r>0$ with probability $1-\Theta(1/\sqrt{n})$, in particular, a highest $c$-branch is unique and higher than all other $c$-branches of $g$ by $2$ with probability $1-\Theta(1/\sqrt{n})$.
\end{theorem}
The proof of the above theorem has been attempted in a draft~\cite{htrees} (Theorem~12). We believe that result is correct but the proof requires a major revision.


\begin{theorem}
\label{lem_H_is_reachable} If the digraph of random letter $a$ satisfies Theorem~\ref{th_high_tree} and $b$ is a random letter, chosen independently of $a$, then the $1$-crown of $a$ intersects with each subautomaton of a an automaton defined by $a,b$ \whp.
\end{theorem}
\begin{proof}

Let $g \in \Sigma_n$ and $H$ be the $1$-crown of $g$. Let $d=|H|$ and
$j$ be the number of roots in $H$. By the assumption $a$ satisfies Theorem~\ref{th_high_tree} \whp, that is, $d > 2j$ for $g=\Gamma_a$ \whp. By Lemma~\ref{lem:one_subautomaton} and Lemma~\ref{lem_size_of_subaut}, there is only one subautomaton $\mathcal{C}$ and its size is at least $n/4$. Therefore, there are at least $\Theta(n^{2n})$ of automata satisfying both constraints. 

Arguing by contradiction, suppose that among such automata there are more than $\Theta(n^{2n-1})$ automata $\A$ such that their $1$-crown does not intersect
with $\mathcal{C}$. Denote this set of automata by $L_n$. For $1 \leq j < d$, denote by $L_{n,d,j}$ the subset of automata from $L_n$ with the $1$-crown having exactly $d$ vertices and $j$ roots. By definition $j < 0.5d$, and the conditions on the size of the subautomaton implies that $d < 3n/4$.   
Thus we have
\begin{equation}
\label{sum_LN} \sum_{d = 2}^{\lfloor 3n/4 \rfloor}\sum_{j = 1}^{\lfloor 0.5
d \rfloor}{|L_{n,d,j}|} = |L_n|.
\end{equation}

Given an integer $n/4 \leq m < 3n/4$, let us consider the
set of all $m$-states automata whose letter $a$ has the
unique highest $1$-branch which is higher by $1$ than a
second highest $1$-branch (or equivalently the 1-crown has only root vertices). Due to Theorem~\ref{th_high_tree}, there are at most $O(m^{2m - 0.5})$ of such automata. Denote this set of automata by $K_m$. By $K_{m,j}$ denote the subset of automata from $K_m$ with exactly $j$ vertices in the $1$-crown. Again, we have
\begin{equation}
\label{sum_KM} \sum_{j = 1}^{m-1}{|K_{m,j}|} = |K_m|.
\end{equation}

Each automaton $\A$ from $L_{n,d,j}$ can be obtained from $K_{m,j}$ for $m = n-(d-j)$ by \emph{growing its crown} as follows. Let us take an automaton $\B = (Q_b, \Sigma)$ from $K_{m,j}$ with no subautomaton of size less than $n/4$. First we extend $Q_b$ with missing $d - j$ states by selecting their insertion positions in $Q_b$. There are ${n \choose d-j}$ ways to do this. Let us denote this set $H_b$. The indices of the states from $Q_b$ are shifted according to the positions of the inserted states, that is, the index $q$ is shifted to the number of chosen indices $z \leq q$ for $H_b$. Next, we choose an arbitrary forest with $H_b$ as the set of non-root vertices and $j$ roots of the $1$-crown of $\B$ in exactly $j d ^{d - j - 1}$ ways (see e.g.~\cite{GenCayley}). Thus we have completely chosen the action of the letter $a$.

Next we choose some subautomaton $M$ of $\B$ and redefine arbitrarily the image by the letter $b$ for all states from $Q_b \setminus M$ to the set $Q_b \cup H_b$ in $n^{m - |M|}$ ways. Within this definition, all automata from $K_{m,j}$ which differ only by $b$-transitions from $Q_b \setminus M$ may lead to the same automaton from $L_{n,d,j}$. Given a subautomaton $M$, denote such class of automata by $K_{m,j,M}$. There are exactly $m^{m - |M|}$ automata from $K_{m,j}$ in each such class. Since $|M| \geq n/4$ and subautomata cannot overlap, $\B$ can appear in at most $4$ such classes.

Thus we have completely chosen both letters and obtained each automaton $\A$ in $L_{n,d,j}$. Therefore, for the automaton $\B$ and one of its subautomaton $M$ of size $z \geq n/4$, we get at most 
$${n \choose d-j} j d ^{d - j - 1} n^{m - z}$$
automata from $L'_{n,d,j}$ each at least $m^{m - z}$ times, where $L'_{n,d,j}$ is the set of automata containing $L_{n,d,j}$ without the constraint on the number of states of a subautomaton. Notice that we get each automaton from $L_{n,d,j}$ while $\B$ runs over all automata from $K_{n-(d-j),j}$ with no subautomaton of size less than $n/4$. Thus we get that
\begin{equation}
|L_{n,d,j}| \leq \sum_{z = \lceil n/4 \rceil}^{n} \sum_{M, |M| = z}{
\sum_{\B \in K_{m,j,M}} { \frac{ {n \choose d-j} j
d ^{d - j - 1} n^{m - z} }{ m^{m - z}} } }.
\end{equation}

Since each automaton $\B \in K_{m,j}$ with no subautomaton of size less than $n/4$ appears in at most $4$ of $K_{m,j,M}$, we get
\begin{equation}
|L_{n,d,j}| \leq 4 |K_{m,j}| \max_{n/4 \leq z \leq m} { \frac{ {n \choose d-j} j d ^{d - j - 1} n^{m - z} }{m^{m - z}} } = 4 |K_{m,j}| { \frac{ {n \choose
d-j} j d ^{d - j - 1} n^{m - n/4} }{ m^{m - n/4}} } .
\end{equation}

Using (\ref{sum_LN}) and (\ref{sum_KM}), we get
\begin{multline}
\label{eq_ln_bound1}
|L_n| \leq 4 \sum_{d = 2}^{\lfloor 3n/4 \rfloor}\sum_{j = 1}^{\lfloor 0.5d \rfloor}{ |K_{m,j}| { \frac{ {n \choose d-j} j d ^{d - j - 1} n^{m - n/4} }{ m^{m - n/4}} }} \leq \\
\leq  4 \sum_{d = 2}^{\lfloor 3n/4 \rfloor} \max_{j \leq  0.5d}{
|K_{m}| { \frac{ {n \choose d-j} j d^{d - j} n^{m - n/4} }{m^{m - n/4}} }}.
\end{multline}

Let $x = d-j$. Then $1 \leq x \leq 0.5d \leq 3/8n$. Using (\ref{eq:comb}), we get that,  
\begin{multline}
{n \choose d-j} = \Theta(1)\frac{n^n}{x^x(n-x)^{n-x}} \sqrt{\frac{n}{x(n-x)}} \leq O(1)\frac{n^n}{x^{x+0.5}(n-x)^{n-x}}
\end{multline}

Using that $|K_m| = O(m^{2m - 0.5})$, for each term of (\ref{eq_ln_bound1}) we have
\begin{multline}
\label{eq_ln_bound2}
{ m^{2m - 0.5}  j d^{d-j} {n \choose d-j} { \left(\frac{n}{m}\right) } ^{m - n/4} } 
\leq \\ \leq
{ (n-x)^{2(n-x) - 0.5}  j d^x \frac{n^n}{ x^{x+0.5}(n-x)^{n-x}} { \left(\frac{n}{n-x}\right) } ^{3/4n - x} } \leq
\\ 
\leq { (n-x)^{n/4 - 0.5}  j d^x \frac{n^{7/4n - x}}{ x^{x+0.5} }  } 
\leq { (1-\frac{x}{n})^{n/4}  j d^x \frac{n^{2n - x - 0.5}}{ x^{x+0.5} }  } \leq \exp{f(x, j)},
\end{multline}
where
\begin{multline}
f(x, j) = \frac{n}{4}\ln{(1-\frac{x}{n})} + \ln{j} + x\ln{d} + (2n - x - 0.5)\ln{n} - (x+0.5)\ln{x}.
\end{multline}
The derivative of $f'_x(x, d-x)$ is
\begin{equation}
-\frac{x}{4(1-\frac{x}{n})} + \frac{1}{d-x} + \ln{d} - \ln{n} - (1 + \ln{x}).
\end{equation}
As $0.5d \leq x \leq d \leq 3n/4$, for $n$ big enough $f'_x(x, d-x) < 0$. It follows that 
\begin{equation}
\max_{j \leq 0.5d}\exp{f(x, j)} = \max_{0.5d \leq x \leq d}\exp{f(x, d-x)} \leq \exp{f(0.5d, 0.5d)}.
\end{equation}
Thus we have that (\ref{eq_ln_bound1}) is upper bounded by
\begin{multline}
\label{eq_max_repl}
O(1) \sum_{d = 2}^{\lfloor 3n/4 \rfloor}\exp{f(0.5d, 0.5d)} \leq O(1) \sum_{d = 2}^{\lfloor 3n/4 \rfloor} { (1-\frac{d}{2n})^{n/4} d^{d/2+1} \frac{n^{2n - (d+1)/2}}{ d^{(d+1)/2} }  } \leq \\ \leq 
O(1) \sum_{d = 2}^{\lfloor 3n/4 \rfloor} { (1-\frac{d}{2n})^{n/4} \sqrt{d} n^{2n - (d+1)/2} }. 
\end{multline}
It can be easily shown that the sum (\ref{eq_max_repl}) is dominated be the first term (for $d=2$) which is $O(1)n^{2n - 1.5}$. This contradicts $|L_n| \geq \Theta(n^{2n-1})$, and the theorem follows.
\end{proof}
Let us now put everything together for the proof of Theorem~\ref{th_main}. 

\begin{proof}[Proof of Theorem \ref{th_main}]
\label{prf_main}
For a letter $x$, let us denote $\C(x)$ that $x$ satisfies Theorem~\ref{th_high_tree}.  Due to independence of the letters, we have
\begin{align*}
Pr(\A \text{ is not synchronizing}) = Pr(\A \text{ is not synchronizing}; \neg\C(a) \neg C(b)) + &\\ Pr(\A \text{ is not synchronizing}; \C(a) \text{ or } \C(b)) \leq &\\ \leq O(1/n) + 2Pr(\A \text{ is not synchronizing}; \C(a)).
\end{align*}
Thus it remains to show that $Pr(\A \text{ is not synchronizing}; \C(a)) = O(1/n)$. From Theorem~\ref{lem_H_is_reachable}, \whp\ the $a$'s $1$-crown intersects with each subautomaton of $\A$ and thus a pair $st(a)$ in the Theorem~\ref{th_1stable} is stable~\footnote{the pair can be defined by $a$ regardless, but it is stable \whp\ if $a$ satisfies Theorem~\ref{th_high_tree}} due to Theorem~\ref{th_1stable}. Formally,
\begin{equation*}
Pr(\A \text{ is not synchronizing}; \C(a)) \leq O(1/n) + Pr(\A \text{ is not synchronizing}; st(a) \text{ is stable}).
\end{equation*}
Then due to Theorem~\ref{th_many_stable_ext}, in the same fashion, we can expand to sets $Z_a$ and $Z_b$ of  $n^{\beta_s}$ distinct pairs independent of $a$ (resp. $b$) and due to Lemma~\ref{lem_stable_cluster} this yields $\widehat{S_a}$ and $\widehat{S_b}$ to be synchronizable. Finally, due to Theorem~\ref{thm:big_sync_sets}, we get that 

\begin{align*}
Pr(\A \text{ is not synchronizing}) \leq \dots \leq Pr(\A \text{ is not synchronizing}; st(a) \text{ is stable}) \leq &\\ \leq O(1/n) +
Pr(\A \text{ is not synchronizing}; Z_a, Z_b \text{ are well defined}) \leq \dots \leq O(1/n),
\end{align*}
which concludes the proof.
\end{proof}

\section{Testing for Synchronization in Linear Expected Time}
In this section we show that following the
proof of Theorem~\ref{th_main} we can decide, whether or
not a given $n$-state automaton $\A$ is
synchronizing in linear expected time in $n$. Notice that
the best known deterministic algorithm (basically due to
\v{C}ern\'y~\cite{Ce64}) for this problem is quadratic on the average and in the worst case.

\begin{theorem}
\label{th_opt_alg} There is a deterministic algorithm for deciding whether or not a given $k$-letter automaton is synchronizing having linear in $n$ expected time. Moreover, for $k>1$ the proposed algorithm is optimal in expected time up to a constant factor.
\end{theorem}
\begin{proof}
First, let's establish the lower bound for the complexity. To do this, we need to make precise the computational model. We consider that algorithms can query their input using questions of the form: ``What is the image of the state $q$ by the letter $\alpha$?''. Let $query(q,\alpha)$ denote such a query. Our lower bounds results are stated as a lower bound on the number of queries required to complete the computations. That is, we do not take into account computation steps other than querying the input, which is not a problem as we aim at proving lower bounds and since in any classical way to represent a deterministic automaton, our queries are indeed the way to access the input data.

\begin{lemma}
\label{lem:alg_lower_bound}
Any deterministic algorithm that decides whether a given $n$-state $k$-letter automaton $\A$ is synchronizing (for $k>1$) performs on average (at least) linear in $n$ number of queries.
\end{lemma}
\begin{proof}
It is necessary to verify that $\A$ is weakly connected to claim that $\A$ is synchronizing. To ensure that, it is required to verify that there is no disconnected state (having connections only from itself, see Lemma~\ref{lem_weak}). To do that, it is necessary to check that a state has either an incoming transition from another state or outgoing transition to another state. If an automaton is synchronizing, an algorithm would have to make at least $\lceil n/2 \rceil$ such queries, as it cannot claim it is synchronizing before ensuring that it is weakly connected. 

Due to Theorem~\ref{th_main}, it happens \whp. Thus the average time complexity must be at least $n/2(1-O(1/n)) = \Theta(n)$ which completes the proof of the lower bound on the complexity.
\end{proof}
It follows from Lemma~\ref{lem:alg_lower_bound} that linear expected time on average cannot be improved for any algorithm that tests for a synchronization.

We now turn to describing the algorithm. The idea of is to subsequently check that all conditions used in Theorem~\ref{th_main} holds for $\A$; if so, we return `Yes'; otherwise, we run aforementioned quadratic-time algorithm for $\A$. Since the probability that any of these conditions is not met is $O(\frac{1}{n})$, the overall expected time is linear in $n$ if all conditions can be verified in linear time.

Thus it remains to prove that all conditions required in Theorem~\ref{th_main} for an automaton to be synchronizable can be checked in linear time. We first describe the algorithm for $k=2$ and then explain how to generalize it for $k>1$. 

First we call Tarjan's linear algorithm~\cite{TarScc} on the underlying digraph of $\A$ to find \emph{minimal strongly connected components}~\footnote{strongly-connected subgraphs with no outgoing edges} (MSCC) and, if there are several MSCC, we return `No' because $\A$ is not weakly connected (whence not synchronizing) in this case. Otherwise, there is a unique MSCC $\B$ and $\A$ is synchronizing whenever $\B$ is. Thus all further calculations can be performed with the automaton $\B$. Due to Lemma~\ref{lem_size_of_subaut}, we may also assume that $\B$ has at least $n/4$ states.

Let us now fix one of the letters, say $a$, and consider its underlying graph. In this graph each state $q \in Q$ is located in some cluster $cluster(q)$. Let $cycle(q)$ denote the state with the smallest index on the cycle and $cl(q)$ denote the cycle length. Assuming cycle vertices are indexed in the clockwise direction from $0$ to $cl(q)-1$, let $tree(q)$ denote the tree in which $q$ is located and $root(q)$ denote the relative index of the root vertex on the cycle. Finally, let $height(q)$ denote the height of a vertex $q$ in its tree (which is $0$ for cycle vertices). Both $cluster(q)$ and $tree(q)$ are needed only to compare whether two states belong to the same cluster or the same tree, so we can label them with an index of a particular state that belong to them.

We want to calculate the \emph{cluster structure}, that is, for each $q \in Q$ we want to compute $root(q), tree(q)$, $cluster(q), cl(q)$ and $height(q)$.

As a secondary information, we compute the number of clusters and their sizes as well as the unique highest $1$-branch if it exists.
\begin{lemma}
\label{calc_lvls}The cluster structure of a letter $x \in \Sigma$ can be calculated in linear in $n$ time.
\end{lemma}
\begin{proof}
In each step we choose an unobserved state~\footnote{This can be done in amortized linear time by maintaining a bit mask of unobserved states and a queue of states. We pop states from the queue until we find one which is unobserved.} $p \in Q$, set $cluster(p) = p$ and walk by the path $$p = p_0, p_1 = p_0.x, \dots , p_m = p_{m-1}.x$$ in the underlying digraph of $x$ until we encounter a state $p_m$ such that $p_m = p_k = p_k.x^{m-k}$ for some $k < m$. When this happens, we set $cl(p) = m-k$. Then we set $root(p_i) = i-k$ for $k \leq i \leq m$ and compute $cycle(q)$ as the state with the smallest index. After that, for each cycle state $q$ we run \emph{Breadth First Search} (BFS) in the tree $tree(q)$ by reversed arrows, and at $j$-th step we set for a current state $s$:

\begin{equation}
height(s) = j,\space tree(s) = q,\space cluster(s) = p,\space root(s) = root(q),\space cl(s)=cl(p).
\end{equation}

We process a full cluster by this subroutine. Since we observe each state only in one subroutine and at most twice, the algorithm is linear. Clearly we can simultaneously evaluate the number of clusters and check whether there is a unique highest $1$-branch satisfying Theorem~\ref{th_high_tree}.
\end{proof}

We may assume that the number of clusters does not exceed $5\ln{n}$ due to Lemma~\ref{lem_cnt_clusters}. If a $1$-branch satisfying Theorem~\ref{th_high_tree} has been found for one of the letters, we can compute in linear time the highest $1$-branch in this tree, for instance, applying the same algorithm on this tree instead of the whole graph. Hence using the cluster structure, one can check in linear time whether $a$ and $b$ satisfy Theorem~\ref{th_high_tree}. Due to independence of the letters neither of them satisfies Theorem~\ref{th_high_tree} with probability $O(\frac{1}{n})$. However, we have to perform all the checks below for both letters in case they both satisfy Theorem~\ref{th_high_tree} to follow the proof~(\ref{prf_main}) and return `No' only if it fails for both letters~\footnote{Otherwise, we couldn't rely on the other letter being random.}.

W.l.o.g. let us describe the procedure for the letter $a$ if it satisfies Theorem~\ref{th_high_tree}. Due to Theorem~\ref{lem_H_is_reachable}, some states of the crown of $a$ belong to $\B$ with probability $1-O(\frac{1}{n})$.

For the letter $a$ and its highest $1$-branch $T$, we find a pair $\{r,q\}$ where $r$ is the root of $T$ and $q$ is the predecessor of the root of the tree containing $T$ on the $a$-cycle. The pair $\{r,q\}$ is stable by Theorem~\ref{th_1stable} and independent of $b$.

Next, following the proof of Theorem~\ref{th_many_stable_ext}, we try to \emph{extend} $\{r,q\}$ to sets $Z_a,Z_b$ of $\lceil n^{0.45} \rceil$ distinct stable pairs each, independent for $a$ and $b$ respectively. The maximum number of pairs that we need to observe during this procedure is bounded by $O(n^{0.45})$, whence this step can be done in linear time. Again, due to Theorem~\ref{th_many_stable_ext}, we fail with probability $O(\frac{1}{n})$ at this stage.

Recall that, given $x \in \{a,b\}$, $S_x$ is the set of clusters of $\Gamma_x$ containing more than $n^{0.45}$ states and $T_x$ is the complement of $S_x$. Given a pair $\{p,q\}$, either $\{p,q\}$ in one $a$-cluster or the states $p$ and $q$ belong to different $a$-clusters. In the latter case, we say that $\{p,q\}$ \emph{connects} these $a$-clusters. Consider the graph $\Gamma(S_a,Z_a)$ with the set of vertices $S_a$,  and draw an edge between two clusters if and only if some pair from $Z_a$ connects them. Since $|S_a| \leq 5\ln{n}$ and $|Z_a| \leq n^{0.45}+1$, one can construct the graph $\Gamma(S_a,Z_a)$ and verify that it is connected in linear time by \emph{Depth First Search} (DFS) yielding the spanning tree of $\Gamma(S_a,Z_a)$ simultaneously. Due to Lemma~\ref{lem_S_conn_clusters}, we fail here with probability $O(\frac{1}{n})$. 

Next, we calculate the greatest common divisor $d$ of the cycle lengths of the clusters in $S_a$. Using the Euclidean algorithm it can be done in $O(\ln^2{n})$ time (at most $|S_a| \leq 5\ln{n}$ runs each having logarithmic time complexity in terms of the maximal cycle size). 

Due to Lemma~\ref{lem_stable_cluster} if $d>1$, we additionally have to verify that no colouring of $S_a$ is possible in $d$ colours which would preserve monochrome pairs under the action of $a$. Suppose there is such a colouring. 
Let $\sigma$ be a partition on $Q$ defined by the letter
$a$ as follows. States $p,q$ are in the same $\sigma$-class
if and only if $p.a^n = q.a^n$. Thus for each cluster $C_i$ with the cycle length $s_i$, all states of $C_i$ are partitioned into $s_i$ classes of equivalence $C_{i,j}$ for $j \in \{0,1,\dots,s_i-1\}$ such that $C_{i,j}.a \subseteq C_{i,(j+1 \mod s_i)}$. We can assume that $C_{i,0}$ contains a state $p$ with $root(p) = 0$ which implies that $q \in C_{i,root(q)}$. Notice that each such class must be monochrome and $d \mid s_i$ for all $i$ (see (\ref{it:divides_d}) of Lemma~\ref{lem_stable_cluster}), and each cycle of $S_a$ has states of each colour (see (\ref{it:all_colours}) of Lemma~\ref{lem_stable_cluster}). 
Hence there must be $0 \leq x_i \leq d-1$ such that $C_{i,x_i}$ is $0$-coloured. Then, for a given state $q$ its colour can be computed as $$d - x_i + root(q) - (height(q) \mod d) \mod d.$$

Let $\{p,q\}$ be a stable pair such that $p \in C_{i,root(p)}, q \in C_{j,root(q)}$. Then $p.a^{nd - root(p) + x_i} \in C_{i,(x_i + nd \mod s_i)}$ which is $0$-coloured whence $$q.a^{nd - root(p) + x_i} \in C_{j, (root(q) + nd +  x_i - root(p) \mod s_j)}$$ must be $0$-coloured too. Hence $root(q) + nd +  x_i - root(p) = x_j$ modulo $d$ or equivalently 
\begin{equation}
\label{eqdiv}
d \mid (root(p) - root(q)) - (x_{cluster(p)} - x_{cluster(q)}).    
\end{equation}

Thus, it is enough to check whether (\ref{eqdiv}) holds for some $x_{i} \mid i \in \{1,2, \dots ,|S_a|\}$ (such that $0 \leq x_{i} \leq d-1$) and for all pairs $\{p,q\} \in S$.

Let us show how this property can be checked in linear time. Consider the spanning tree $T$ of $\Gamma(S_a,Z_a)$ (which we have computed previously) and recall that each edge of $\Gamma(S_a,Z_a)$ corresponds to a pair from $Z_a$. We start from the root $r$ of $T$ and set $x_{cluster(r)} = 0$. Next, we traverse the edges of the tree $T$ using DFS. For each next edge and a corresponding pair $\{p,q\} \in Z_a$, we have that either $x_{cluster(p)}$ or $x_{cluster(q)}$ is already defined. This allows to determine the other index between $0$ and $d-1$ in the unique way to satisfy (\ref{eqdiv}). While traversing the tree, we define all $x_{i}$. After all $x_{i}$ are defined, we can check (\ref{eqdiv}) for all the pairs from $Z_a$. Clearly, the success of the procedure does not depend on the choice of $x_{cluster(r)}$. Since there are at most $n^{0.45}+1$ of pairs in $Z_a$, this routine can be done in linear time. Due to Lemma~\ref{lem_stable_cluster}, we fail with probability $O(\frac{1}{n})$.

Thus, due to Lemma~\ref{lem_stable_cluster}, we may assume that all clusters of $\Gamma_a$ of size at least $n^{0.45}$ are contained in one \emph{synchronizing class} $\widehat{S_a}$, i.e. each pair from $\widehat{S_a}$ can be synchronized. Moreover, since $S_a$ is defined by the letter $a$, this class is independent of $b$. We can do the same for the letter $b$ and obtain the corresponding set $S_b$ with the same properties.

It remains to prove that we can check the sufficient conditions for automaton being synchronizable following the proof of Theorem~\ref{thm:big_sync_sets} in linear time. Notice that we can decide whether a state belongs to $\widehat{T_a}$ or $\widehat{T_b}$ in constant time.

Let $c_p,c_q$ be two (possibly equal) $a$-cycles, $s_p, s_q$ being their respective lengths. We will show how to do all the checks for $c_p,c_q$ in $O(n / \ln^2{n})$ time. This would suffice, since there are at most $25\ln^2{n}$ clusters if we get to this stage.

Let $d$ be the g.c.d. of $s_p$ and $s_q$ as in Theorem~\ref{thm:big_sync_sets} and $2\beta \leq s_p \leq s_q$. Using the Euclidean algorithm $d$ can be computed in $O(\ln^2{n})$ time. Notice that if both $s_p$ and $s_q$ are greater than $n^{\alpha_c}$, then both clusters belong to $S_a$ and thus cannot contain a deadlock pair. Hence without loss of generality, we can assume that $s_p \leq n^{\alpha_c}$ and $s_p \leq s_q$.

Now, for $c_p$ we compute the set of indices $I_p \subseteq Z_d = \{0, 1, \dots, d-1\}$ such that for each $i \in I_p$ we have $c_{p,(i + k_1 d) \bmod
s_p} \in \widehat{T_b}$ for all $k_1$. Then we do the same for $I_q$. This clearly can be done in $O(s_p + s_q) = O(n^{\alpha_c})$ time. If $|I_p| + |I_q| < d$,  we know that there cannot be a deadlock pair as we would be able to map such a pair to one that belongs to $\widehat{S_b}$ (see Theorem~\ref{thm:big_sync_sets} for details). 

If $|I_p| + |I_q| \geq d$ ,  due to \textbf{2. Two cycles of big size}  of Theorem~\ref{thm:big_sync_sets} this happens with $O(1/n\ln^2{n})$, thus we can fallback to the general algorithm. The case \textbf{1. One cycle of a big size} (when $c_p = c_q$) can be done in the same way. Finally, \textbf{3. One of the cycles is small} is straightforward since it considers a constant size chains of states for $O(\ln^2{n})$ pairs. The only thing we have to notice is that finding all states $\{p,q\}$ such that $p \in c_p$ and $q$ is the state with the smallest index in $c_q$ can be done in $O(\ln^2{n})$ time using $cycle(q)$ and $cl(q)$.   

If we did not fail up to this moment, we return `Yes'. Otherwise, if it fails for one of the letters, we must've run the quadratic algorithm with probability $O(1/n)$. The correctness of the algorithm now follows from Theorem~\ref{thm:big_sync_sets}. Thus we have shown that we can confirm all the required properties in linear time and fail with $O({1}{n})$ probability. This concludes the proof for the $2$-letter alphabet case. 

Suppose we have an automaton $\A = \langle Q, \{a_1, a_2, \dots, a_k\} \rangle$ for $k>2$. In this case, we run the aforementioned algorithm for the $2$-letter alphabet case for the automaton $\A_1 = \langle Q, \{a_1, a_2\} \rangle$ but in the case of failure at some stage, we neither execute the quadratic algorithm nor return `No'. Instead, we consider the automaton for the next two letters $\A_2 = \langle Q, \{a_3, a_4\} \rangle$ and continue this way while there are two other letters. If at some iteration, the considered automaton is synchronizing, we return `Yes'. In the opposite case, in the end we just run the quadratic algorithm having complexity $O(n^2 k)$ for the entire automaton $\A$. Since the letters are chosen independently, this happens with probability $O(\frac{1}{n^{[k/2]}})$. Since $k>2$, the overall expected complexity  $O(\frac{n^2 k}{n^{[k/2]}})$ is linear again.
\end{proof}

Pavel Ageev, a former master student of Mikhail Volkov, has implemented a modified version of an earlier version of above algorithm~\cite{PavelAlgorithm}. 
He relaxed some conditions in the properties we have to check, namely, the property that stable pairs (found according to Section~\ref{sub_sec:extending_stable_pairs}) consist of pairwise distinct states. Clearly, this relaxation does not affect correctness of the algorithm. He then launched the modified algorithm on $1000$ of random binary automata with $n$ states for $n \in \{1000, 2000, \dots, 10000\}$. The results are shown in the following table. 
\begin{center}
	\begin{tabular}{ | p{1.8cm} | l | l | l | l | l | l | l | l | l | l |}
		\hline
		meaning / $n / 1000$ & 1 & 2 & 3 & 4 & 5 & 6 & 7 & 8 & 9 & 10\\ \hline
		average \#\emph{bad automata} per $n$ automata &  5.09 & 4.84 & 4.76 & 4.82 & 5.07 & 5.26 & 4.76 & 5.13 & 5.00 & 5.32 \\ \hline
		elapsed ms. per good automaton & 1 & 2 & 3 & 3 & 4 & 5 & 6 & 7 & 8 & 9 \\ \hline
		\end{tabular}
\end{center}
These experiments indirectly confirms that random binary automaton is synchronizing with probability $1-\frac{\alpha(n)}{n}(1+o(1))$ where $\alpha(n) \leq 0.0055$.   
 
\section{Conclusions}
Theorem~\ref{th_main} gives an exact order of the
convergence rate for the probability of being
synchronizable for $2$-letter automata
up to a constant factor. 
It is fairly easy to verify that the convergence rate for
$t$-size alphabet case ($t>1$) is $1-O(\frac{1}{n^{0.5t}})$
because the main restriction comes from the probability of
having a unique $1$-branch for some letter. Thus perhaps the most natural open question here is about the tightness of the convergence rate $1-O(\frac{1}{n^{0.5t}})$ for the $t$-letter alphabet case.

Since only weakly connected automata can be synchronizing,
the second natural open question is about the convergence
rate for random weakly connected automata of being
synchronizable for the uniform distribution. Especially, binary alphabet is of certain interest because the upper bound for this case comes from a non-weakly connected case. We predict exponentially small probability of not being synchronizable for this case and $\Theta(\frac{1}{n^{k-1}})$ for random $k$-letter automata (for $k>1$).

Another challenging problem concerns a generalization of synchronization property. Given $d \geq 1$ what is the probability that for uniformly at random chosen strongly (or weakly) connected automaton -- the minimum rank of words is equal to $d$. Notice that the case $d = 1$ corresponds to the original problem of synchronization, and for $d = n$ it is the probability that all letters are permutations, which is $\left({n!}/{n^{n}}\right)^{k} \sim n^{k/2} e^{-nk}$. We believe that for $d>1$ the probability is exponentially close to $1$ and, if one could prove it, this would lead to the positive answer to the aforementioned hypothesis for the weakly connected case.  

\section{Acknowledgements}

The author is grateful to Mikhail Volkov for permanent support in the research and also to Cyril Nicaud, Marek Szyku{\l}a, Dominique Perrin, Marie-Pierre B{\'e}al, Pavel Ageev and Julia Mikheeva for their interest and useful suggestions. The author is also thankful to anonymous referees of Theoretical Computer Science journal for very useful feedback which helped to improve the presentation of the results. Significant part of the work has been done by the author while being employed by the Institute of Mathematics and Natural Sciences, Ural Federal University, Ekaterinburg, Russia.

\bibliography{main}

\end{document}